\def\eqref#1{equation~\ref{#1}}
\def\1{\bm{1}}
\DeclareMathAlphabet{\mathsfit}{\encodingdefault}{\sfdefault}{m}{sl}
\SetMathAlphabet{\mathsfit}{bold}{\encodingdefault}{\sfdefault}{bx}{n}
\newcommand{\E}{\mathbb{E}}
\DeclareMathOperator*{\argmin}{arg\,min}
\newtheorem{theorem}{Theorem}
\newtheorem{lemma}[theorem]{Lemma}
\theoremstyle{definition}
\newtheorem{assumption}{Assumption}
\newtheorem{definition}{Definition}
\newcommand{\norm}[1]{\left\lVert#1\right\rVert}
\title{Communication-Efficient Actor-Critic \\ Methods for Homogeneous Markov Games}
\author{Dingyang Chen$^1$, Yile Li, Qi Zhang$^1$  \\
$^1$ Artificial Intelligence Institute, University of South Carolina \\
$^1$ \texttt{dingyang@email.sc.edu, qz5@cse.sc.edu} 
}
\begin{document}

\maketitle


\begin{abstract}
Recent success in cooperative multi-agent reinforcement learning (MARL) relies on centralized training and policy sharing. Centralized training eliminates the issue of non-stationarity MARL yet induces large communication costs, and policy sharing is empirically crucial to efficient learning in certain tasks yet lacks theoretical justification. In this paper, we formally characterize a subclass of cooperative Markov games where agents exhibit a certain form of homogeneity such that policy sharing provably incurs no suboptimality. This enables us to develop the first consensus-based decentralized actor-critic method where the consensus update is applied to both the actors and the critics while ensuring convergence. We also develop practical algorithms based on our decentralized actor-critic method to reduce the communication cost during training, while still yielding policies comparable with centralized training.
\end{abstract}

\section{Introduction}
Cooperative multi-agent reinforcement learning (MARL) is the problem where multiple agents learn to make sequential decisions in a common environment to optimize a shared reward signal, which finds a wide range of real-world applications such as traffic control \citep{chu2019multi}, power grid management \citep{callaway2010achieving}, and coordination of multi-robot systems \citep{corke2005networked}.
Efficient learning for large and complex cooperative MARL tasks is challenging.
Naively reducing cooperative MARL to single-agent RL with a joint observation-action space imposes significant scalability issues, since the joint space grows exponentially with the number of agents.
Approaches that treat each agent as an independent RL learner, such as Independent Q-Learning \citep{Tan93multi-agentreinforcement}, overcome the scalability issue yet fail to succeed in complicated tasks due to the non-stationarity caused by other learning agents' evolving policies.
To address these challenges, the paradigm of Centralized Training and Decentralized Execution (CTDE) is then proposed, where a {\em centralized trainer} is assumed to access to information of all agents during training to approximate the global (action-)value function, whereas each agent only needs local information for its action selection during decentralized policy execution \citep{lowe2017multi,foerster2017counterfactual}.
The centralized critic eliminates non-stationarity during training, while the policy decentralization ensures scalability during execution. 
Besides, existing CTDE methods almost always enable {\em policy parameter sharing} to further improve learning scalability and efficiency, where agents also share the parameters of their decentralized policies.

However, in many real-world scenarios, there is not a readily available centralizer that conveniently gathers the global information from all agents, and therefore agents need to rely on all-to-all communication for centralized training, incurring enormous communication overheads for large numbers of agents.
This motivates us to think about whether it is possible to train agents in a decentralized and communication-efficient manner, while still keeping the benefits of the centralized training of CTDE.
Moreover, despite its wide adoption, little theoretical understanding has been provided to justify policy parameter sharing.
Agents should at least exhibit a certain level of homogeneity before it is feasible to share their policies. For example, if the observation and/or action spaces vary across agents, then their decentralized policies cannot even have the same architecture.
Even if it is feasible, it is unclear whether restricting the agents to share their policy parameters will introduce any suboptimality.

In this paper, we address these aforementioned issues centered around the CTDE framework.
We begin by formally characterizing a subclass of Markov games where the cooperative agents exhibit a certain form of homogeneity such that it is not only feasible but also incurs no suboptimality to share their decentralized policies, thus providing a first theoretical justification for policy parameter sharing.
We then develop a decentralized actor-critic algorithm for homogeneous MGs where agents share their critic and actor parameters with consensus-based updates, for which we prove an asymptotic convergence guarantee with linear critics, full observability, and other standard assumptions.
To our knowledge, this is the first decentralized actor-critic algorithm that enjoys provable convergence guarantees with policy (i.e., actor) consensus.
To account for communication efficiency, we develop a simple yet effective bandit-based process that wisely selects when and with whom to perform the parameter census update based on the feedback of policy improvement during training.
To further account for partial observability, we develop an end-to-end learnable gating mechanism for the agents to selectively share their observations and actions for learning the decentralized critics.
This series of innovations are capable of transforming any CTDE algorithm into its decentralized and communication-efficient counterpart, with policy consensus in homogeneous MGs for improved efficiency.
Our empirical results demonstrate the effectiveness of these innovations when instantiated with a state-of-the-art CTDE algorithm, achieving competitive policy performance with only a fraction of communication during training. 

Our contribution is therefore summarized as three-fold:
(1) the characterization of a subclass of cooperative Markov games, i.e. homogeneous Markov games (MGs), where policy sharing provably incurs no loss of optimality;
(2) a decentralized MARL algorithm for homogeneous MGs that enjoys asymptotic convergence guarantee with policy consensus; and
(3) practical techniques that transform CTDE algorithms to their decentralized and communication-efficient counterparts.

\section{Related Work}
\label{sec:Related Work}
\textbf{Communication in cooperative MARL.}
Communication is key to solving the issue of non-stationarity in cooperative MARL.
The CTDE paradigm \citep{lowe2017multi, foerster2017counterfactual} assumes a centralized unit during training to learn a joint value function.
Other methods, such as CommNet \citep{sukhbaatar2016learning} and BiCNet \citep{peng2017multiagent}, do not assume a centralized unit and instead allow agents to share information by all-to-all broadcasting, effectively relying on centralized communication.
These methods require centralized/all-to-all communication that impedes their application to large numbers of agents.
Although follow-up work such as IC3Net \citep{singh2018learning} and VBC \citep{zhang2019efficient} proposes algorithms to learn when to communicate, agents there still perform all-to-all communication before others decide whether to receive.
We instead entirely abandon centralized/all-to-all communication, letting each agent decide whom to communicate to purely based on its local observation.
There is another line of work, Networked MARL (NMARL) \citep{zhang2018fully}, that where agents lie on a predefined network such that neighboring agents can freely communicate. 
Our approach instead learns sparse communication that is dynamically adjusted during decentralized training, even if the predefined network topology can be dense.

\textbf{Policy parameter sharing and consensus.}
Policy parameter sharing is widely adopted in MARL where agents share the same action space, yet it has not been theoretically justified except under the mean-field approximation where the transition dynamics depends on the {\em collective} statistics of all agents and not on the identities and ordering of individual agents \citep{nguyen2017collective,nguyen2017policy,yang2018mean}.  
A recent result from \cite{kuba2021trust} states that enforcing policy parameter sharing in a general cooperative MG can lead to a suboptimal outcome that is exponentially-worse with the number of agents.
We are the first to 1) formally characterize the subclass of homogeneous MGs without the notion of mean-field approximation, where enforcing policy parameter incurs no suboptimality and 2) develop an algorithm that performs policy parameter sharing in homogeneous MGs in a soft manner with decentralized consensus-based policy update with convergence guarantees.
\cite{zhang2019distributed} also develop a policy consensus algorithm for decentralized MARL, yet they do not assume homogeneity and thus need each agent to represent the joint policy for consensus. 

\textbf{Communication-efficient MARL.}
There have been several recent works that also aim to achieve communication-efficiency in decentralized MARL.
\cite{chen2021sample} use pre-specified communication topology and reduce communication frequency for actor-critic via mini-batch updates;
in contrast, our work adaptively learn sparse communication topology during the decentralized training process.
\cite{chen2021communication} generalize their method of communication-efficient gradient descent from distributed supervised learning \citep{chen2018lag} to distributed reinforcement learning with policy gradient methods, where they assume the existence of a centralized controller that gather the policy gradients from decentralized agents which only communicate when the change in gradient exceeds a predefined threshold;
in contrast, our method does not rely on a centralized controller, and we empirically demonstrate the benefit of our adaptive communication learning over a rule-based baseline inspired by \cite{chen2021communication}.
\cite{gupta2020networked} learn discrete messages among agents with a fixed communication topology, where the communicated messages are used to form the policy for action selection rather than for decentralized training.

\section{Homogeneous Markov Game}
\label{sec:Homogeneous Markov Game}
We consider a cooperative Markov game (MG) $\langle \mathcal{N},\mathcal{S},\mathcal{A}, P, R \rangle$ with
$N$ agents indexed by $i\in\mathcal{N}=\{1,...,N\}$,
state space $\mathcal{S}$,
action space $\mathcal{A} = \mathcal{A}^1\times\cdots\times\mathcal{A}^N$,
transition function $P: \mathcal{S}\times\mathcal{A}\times\mathcal{S}\to[0,1]$,
and reward functions $R=\{R^i\}_{i\in\mathcal{N}}$ with $R^i: \mathcal{S}\times\mathcal{A}\to\mathbb{R}$ for each $i\in\mathcal{N}$.
In Section \ref{sec:Homogeneous Markov Game}, we assume full observability for simplicity, i.e., each agent observes the state $s\in\mathcal{S}$.
Under full observability, we consider joint policies, $\pi:\mathcal{S}\times\mathcal{A}\to[0,1]$, that can be factored as the product of local policies $\pi^i:\mathcal{S}\times\mathcal{A}^i\to[0,1]$, $\pi(a|s) = \prod_{i\in\mathcal{N}}\pi^i(a^i|s)$.
Let $r(s,a):= \frac{1}{N}\sum_{i\in\mathcal{N}} R^i(s,a)$ denote the joint reward function, and let $\gamma \in [0,1]$ denote the discount factor.
Define the discounted return from time step $t$ as $G_t = \sum_{l=0}^{\infty} \gamma^l r_{t+l}$, where $r_t:=r(s_t,a_t)$ is the reward at time step $t$. The agents’ joint policy $\pi=(\pi^1,...,\pi^N)$ induce a value function, which is defined as $V^\pi(s_t) = \E_{s_{t+1:\infty}, a_{t:\infty}}[G_t|s_t]$, and action-value function $Q^\pi(s_t, a_t) = \E_{s_{t+1:\infty}, a_{t+1:\infty}}[G_t|s_t, a_t]$.
The agents are cooperative in the sense that they aim to optimize their policies with respect to the joint reward function, i.e.,
$\max_{\pi}J(\pi) = \E_{s_{0:\infty}, a_{0:\infty}}[G_0]$.

\subsection{Homogeneous MG: Definition,  Properties, and Examples}
\label{sec: Homogeneous MG: Definition,  Properties, and Examples}
As along as the action spaces $\{\mathcal{A}^i\}_{i\in\mathcal{N}}$ are homogeneous, policy sharing among $\{\pi^i\}_{i\in\mathcal{N}}$ is feasible.
However, such policy sharing can incur suboptimal joint policies for general MGs, as we will see in an example introduced by \cite{kuba2021trust} and revisited in this subsection. 
Here, we characterize a subclass of Markov games in Definition \ref{definition:Homogeneous Markov game} requiring conditions stronger than homogeneous action spaces, where policy sharing provably incurs no suboptimality.

\begin{definition}[Homogeneous Markov game]
\label{definition:Homogeneous Markov game}
Markov game
$\langle \mathcal{N},\mathcal{S},\mathcal{A}, P, R \rangle$ is {\em homogeneous} if:
\begin{enumerate}[label=(\roman*)]
    \item \label{item:homogeneous state-action spaces}
    The local action spaces are homogeneous, i.e., $\mathcal{A}^i=\mathcal{A}^j ~\forall i,j\in\mathcal{N}$.
    Further, the state is decomposed into local states with homogeneous local state spaces, i.e.,
    $s=(s^1,...,s^N)\in\mathcal{S} = \mathcal{S}^1\times\cdots\times\mathcal{S}^N$ with $\mathcal{S}^i=\mathcal{S}^j ~\forall i,j\in\mathcal{N}$. 
    
    \item \label{item:permutation invariant}
    The transition function and the joint reward function are permutation invariant and permutation preserving. Formally, for any $s_t=(s^1_t,...,s^N_t),s_{t+1}=(s^1_{t+1},...,s^N_{t+1})\in\mathcal{S}$ and $a_t=(a^1_t,...,a^N_t)\in\mathcal{A}$, we have
    \begin{align*}
        P(M s_{t+1}|M s_t,M a_t) =& P(M' s_{t+1}|M' s_t,M' a_t),
        \quad
        R(M s_t, M a_t) = M R(s_t, a_t)
    \end{align*}
    for any $ M, M'\in\mathcal{M}$,
    where $R(s,a):=(R^1(s,a),...,R^N(s,a))$,  
    $Mx$ denotes a permutation $M$ of ordered list $x=(x^1,...,x^N)$,
    and $\mathcal{M}$ is the set of all possible permutations. 
    
    \item \label{item:permutation preserving}
    Each agent $i\in\mathcal{N}$ has access to a bijective function $o^i:\mathcal{S}\to\mathcal{O}$ (i.e., each agent has full observability) that maps states to a common observation space $\mathcal{O}$.
    These observation functions $\{o^i\}_{i\in\mathcal{N}}$ are permutation preserving with respect to the state, i.e., for any $s\in\mathcal{S}$ and any $M\in\mathcal{M}$,
    \begin{align*}
        \left(o^1(M s),...,o^N(M s)\right) = M\left(o^1( s),...,o^N(s)\right).
    \end{align*}
\end{enumerate}
\end{definition}

By Definition \ref{definition:Homogeneous Markov game}, besides requiring homogeneous action spaces, our characterization of homogeneous MGs further requires that the global state can be factored into homogeneous local states (condition \ref{item:homogeneous state-action spaces}) such that the transition and reward functions are permutation invariant (condition \ref{item:permutation invariant}).
Moreover, condition \ref{item:permutation preserving} requires each agent to have an observation function to form its {\em local representation} of the global state.
The main property of the homogeneous MG is that, after representing the global state with the observation functions, policy sharing incurs no suboptimality. This is formally stated in Theorem 
\ref{theorem:Homogeneous MG} and proved in Appendix \ref{sec:Proof of theorem: Homogeneous MG}.
\begin{theorem}
\label{theorem:Homogeneous MG}
Let $\Pi$ be the set of state-based joint policies, i.e.,
$
    \Pi=\{\pi=(\pi^1,...,\pi^N): \pi^i: \mathcal{S}\times\mathcal{A}^i\to[0,1]\},
$
and let $\Pi_o$ be the set of observation-based joint policies, i.e.,
$
    \Pi_o=\{\pi_o=(\pi^1_o,...,\pi^N_o): \pi^i_o: \mathcal{O}\times\mathcal{A}^i\to[0,1]\}.
$
In homogeneous MGs, we have
\begin{align*}
     \max_{\pi=(\pi^1,...,\pi^N)\in\Pi} J(\pi)
    = \max_{\pi_o = (\pi^1_o,...,\pi^N_o)\in\Pi_o} J(\pi_o)
    = \max_{\pi_o = (\pi^1_o,...,\pi^N_o)\in\Pi_o:~ \pi^1_o=...=\pi^N_o} J(\pi_o).
\end{align*}
\end{theorem}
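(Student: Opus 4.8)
The plan is to prove the two equalities separately. The middle quantity $\max_{\pi_o\in\Pi_o}J(\pi_o)$ equals the left quantity $\max_{\pi\in\Pi}J(\pi)$ because the observation-based and state-based policy classes are equally expressive; the right quantity is automatically at most the middle one, since the shared policies form a subset of $\Pi_o$. Hence the real work is to exhibit a shared policy that attains the observation-based optimum.

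\textbf{First equality.} Since each $o^i:\mathcal{S}\to\mathcal{O}$ is a bijection (condition \ref{item:permutation preserving}), I would set up a value-preserving correspondence between $\Pi$ and $\Pi_o$: to a state-based component $\pi^i$ associate $\pi^i_o(\cdot\,|\,\omega):=\pi^i(\cdot\,|\,(o^i)^{-1}(\omega))$, and conversely $\pi^i(\cdot\,|\,s):=\pi^i_o(\cdot\,|\,o^i(s))$. These maps are mutually inverse, and each induces exactly the same joint policy $\pi(a|s)=\prod_{i}\pi^i(a^i|s)$ as a function of $(s,a)$. Because $J$ depends on a joint policy only through this function, the two maxima coincide. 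This step is routine.

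\textbf{Shared-policy equality.} Here I would first propagate the symmetry of the game to the optimal value. From condition \ref{item:homogeneous state-action spaces} the permutation action on $\mathcal{S}$ and $\mathcal{A}$ is well defined, and from condition \ref{item:permutation invariant} the Bellman optimality operator commutes with this action, since $P$ is permutation invariant and $r=\frac1N\sum_{i}R^i$ is as well (note $r(Ms,Ma)=r(s,a)$ follows from $R(Ms,Ma)=MR(s,a)$ because the sum of the reward components is permutation invariant). Iterating the operator from a permutation-invariant initialization then yields $Q^*(Ms,Ma)=Q^*(s,a)$. I would extract a permutation-equivariant optimal deterministic joint policy $\mu^*$, i.e. $\mu^*(Ms)=M\mu^*(s)$; a deterministic optimizer exists because maximizing $J$ over product policies coincides with solving the induced joint-action MDP, whose optimum is attained by a deterministic (hence product) policy. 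Finally I translate equivariance into sharing via condition \ref{item:permutation preserving}: for any agent $i$ and state $s$, taking $M$ to be the transposition of $1$ and $i$, equivariance gives $\mu^{i,*}(s)=\mu^{1,*}(Ms)$, while condition \ref{item:permutation preserving} gives $o^i(s)=o^1(Ms)$. Thus the single map $\bar\mu(\omega):=\mu^{1,*}((o^1)^{-1}(\omega))$ satisfies $\bar\mu(o^i(s))=\mu^{i,*}(s)$ for every $i$ and $s$, so the shared observation-based policy with all components equal to $\bar\mu$ induces exactly $\mu^*$ and attains the optimum.

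\textbf{Main obstacle.} The delicate step is producing the equivariant optimizer $\mu^*$: permutation invariance of $Q^*$ only guarantees that the greedy set $\argmax_a Q^*(s,a)$ is permutation equivariant, and one must select from it consistently with the group action. I would attempt this orbit-by-orbit under $\mathcal{M}$, the crux being whether, at each state, a greedy action fixed by that state's stabilizer can be chosen so that the selection is single-valued on the orbit. I expect this equivariant tie-breaking — together with the index-versus-observation bookkeeping needed to apply condition \ref{item:permutation preserving} correctly — to be the part requiring the most care, and the place where any additional structure beyond conditions \ref{item:homogeneous state-action spaces}--\ref{item:permutation preserving} would have to enter.
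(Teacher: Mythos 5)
Your proposal follows essentially the same route as the paper's proof: the first equality via the value-preserving bijection between $\Pi$ and $\Pi_o$ induced by $\{o^i\}$, and the second by producing a permutation-equivariant optimal state-based policy and converting equivariance into sharing through the permutation-preservation of the observation functions (your identity $\bar\mu(o^i(s))=\mu^{i,*}(s)$ is exactly the paper's chain $\pi^i_{*o}(\cdot|o^i(s))=\pi^j_{*o}(\cdot|o^j(Ms))$ combined with $o^i(s)=o^j(Ms)$). The difference is that the paper dispatches your ``main obstacle'' in a single unproved sentence --- ``due to the permutation invariance of the transition and reward functions \ldots there exists an optimal state-based joint policy $\pi_*$ such that $\pi^i_*(\cdot|s)=\pi^j_*(\cdot|Ms)$'' --- whereas you derive the $Q^*$-symmetry explicitly and then flag the equivariant selection step as open.

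You are right to flag that step, because it is where the argument can genuinely fail under Definition \ref{definition:Homogeneous Markov game} as written. At a state $s$ fixed by a nontrivial permutation $M$ (e.g.\ $s^i=s^j$), condition \ref{item:permutation preserving} forces $o^i(s)=o^j(s)$, so a shared observation-based policy must give agents $i$ and $j$ identical action distributions at $s$; meanwhile invariance of $Q^*$ only makes the greedy set $M$-closed, not guaranteed to contain an $M$-fixed action, and symmetrizing a product policy over the group does not preserve its value. A concrete instance: two agents, a single (hence permutation-fixed) state, actions $\{0,1\}$, a one-step episode, and team reward $1$ exactly on the anti-diagonal $\{(0,1),(1,0)\}$; all three conditions of Definition \ref{definition:Homogeneous Markov game} hold, yet the unshared optimum is $1$ while the best shared product policy achieves $2p(1-p)\le 1/2$. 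So the existence of the equivariant optimizer requires either an additional hypothesis (e.g.\ that distinct agents always occupy distinct local states, as in all of the paper's examples) or an argument that neither you nor the paper supplies. Relative to the paper's own proof you have missed nothing; you have located precisely the assertion it leaves unjustified.
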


To provide more intuition for homogeneous MGs, we here give an example from Multi-Agent Particle Environment (MPE) \citep{lowe2017multi} and a non-example from \cite{kuba2021trust}.
Appendix \ref{sec:Examples and Non-Examples of Homogeneous MGs} provides more examples and non-examples show the generality of our homogeneous MG subclass.


\textbf{Example: Cooperative Navigation.} 
In a Cooperative Navigation task in MPE, $N$ agents move as a team to cover $N$ landmarks in a 2D space.
The landmarks are randomly initialized at the beginning of an episode, and fixed throughout the episode.
Under full observably where each agent can observe the information (locations and/or velocities) of all agents and landmarks, we can cast a Cooperative Navigation task as a homogeneous MG by verifying the three conditions in Definition \ref{definition:Homogeneous Markov game}:
\ref{item:homogeneous state-action spaces}
The local state of agent $i$ consists of its absolute location $l^i = (l^i_x, l^i_y)\in\mathbb{R}^2$ and its absolute velocity $v^i = (v^i_x, v^i_y)\in\mathbb{R}^2$ with respect to the common origin, as well as the absolute locations of all $N$ landmarks, $\{c^k=(c^k_x, c^k_y)\}_{k=1}^{N}$.
Therefore, the location state spaces are homogeneous, and the concatenation of all the location states preserves the global state of the task.
Since local action is the change in velocity, the local action spaces are also homogeneous.
\ref{item:permutation invariant} The transition function determines the next global state by the current state and all agents' actions according to physics, and thus it is permutation invariant. The reward function $R^i$ determines the reward for agent $i$ according to the distances between all the agents and the landmarks to encourage coverage, as well as penalties to discourage collisions if any, resulting in a permutation preserving joint reward function.
\ref{item:permutation preserving} In MPE, agents' observations are based on {\em relative}, instead of absolute, locations and/or velocities of other agents and/or landmarks. For Cooperative Navigation, such observations happen to define observation functions that are bijective and permutation preserving. Specifically, function $o^i$ yields the observation for agent $i$ that consists of its absolute location $l^i$ and velocity $v^i$, the relative location $l^i_j:=l^j - l^i$ and velocity $v^i_j:=v^j - v^i$ of other agents $j\in\mathcal{N}\setminus \{i\}$, and the relative location $c^i_k:=c^k-l^i$ of all the landmarks $k=1,..,N$.

\textbf{Non-Example: a stateless MG.}
\cite{kuba2021trust} recently shows that enforcing policy parameter sharing is exponentially-worse than the optimality without such a restriction in the following stateless MG:
Consider a cooperative MG with an even number of $N$ agents, a state $s$ fixed as the initial state, and the joint action space $\{0,1\}^N$, where 1) the MG deterministically transits from state $s$ to a terminal state after the first time step, and 2) the reward in state $s$ is given by $R(s, 0^{1:N/2},1^{1+N/2:N})=1$ and $R(s, a^{1:N})=0$ for all other joint actions.
It is obvious that the optimal value of this MG (in state $s$) is $1$, while \cite{kuba2021trust} prove that the optimal value under policy parameter sharing is $1/2^{N-1}$.
This MG is not a homogeneous MG: the agents are relying on the raw state to represent their policies, and therefore their observation functions are identity mappings $o^i(s)=s$, which is not permutation preserving and violates Definition \ref{definition:Homogeneous Markov game}\ref{item:permutation preserving}.

\subsection{Policy Consensus for Homogeneous MGs}
\label{sec:Policy Consensus for Homogeneous MGs}
Theorem \ref{theorem:Homogeneous MG} theoretically justifies the parameter sharing among the actors with observation-based representations, which enables us to develop the first decentralized actor-critic algorithms with consensus update among local (observation-based) actors.

Formally, the critic class $Q(\cdot,\cdot; \omega)$ is parameterized with $\omega$ to approximate the global action-value function $Q^\pi(\cdot,\cdot)$.  
Upon on-policy transition $(s_t,a_t,\{r^i_t\}_{i\in\mathcal{N}},s_{t+1},a_{t+1})$ sampled by the current policy, the critic parameter $\omega^i$ for each agent $i\in\mathcal{N}$ is updated using its local temporal difference (TD) learning followed by a consensus update \citep{zhang2018fully}:
\begin{align} \label{eq:critic_update}
    \tilde{\omega}^i_t = \omega^i_t  + \beta_{\omega,t} \cdot \delta^i_t \cdot \nabla_\omega Q(s_t,a_t; \omega^i_t)
    , \qquad
    \omega^i_{t+1} = \textstyle\sum_{j\in\mathcal{N}}c_{\omega,t}(i,j)\cdot  \tilde{\omega}^j_t
\end{align}
where 
$r^i_t=R^i(s_t,a_t)$ is the local reward of  agent $i$,
$\delta^i_t = r^i_t + \gamma Q(s_{t+1},a_{t+1}; \omega^i_t) - Q(s_t,a_t; \omega^i_t)$ is the local TD error of agent $i$,
$\beta_{\omega,t}>0$ is the critic stepsize,
and $C_{\omega,t} = [c_{\omega,t}(i,j)]_{i,j\in\mathcal{N}}$ is the critic consensus matrix.
The observation-based actor for each agent $i\in\mathcal{N}$ is parameterized as $\pi^i(a^i|o^i(s); \theta^i)$ with parameter $\theta^i$, which is updated based on the multi-agent policy gradient derived from the critic followed by a consensus update:
\begin{align} \label{eq:actor_update}
    \tilde{\theta}^i_{t+1} = \theta^i_t + \beta_{\theta,t}\cdot Q(s_t,a_t; \omega^i_t) \cdot \nabla_{\theta^i} \log \pi^i(a^i_t|o^i(s_t);\theta^i_t)
    ,\qquad
    \theta^i_{t+1} = \textstyle\sum_{j\in\mathcal{N}}c_{\theta,t}(i,j)\cdot  \tilde{\theta}^j_t
\end{align}
where the observation-based actor class $\pi(\cdot|\cdot; \theta)$ is assumed to be differentiable, 
$\beta_{\theta,t}>0$ is the actor stepsize,
and $C_{\theta,t} = [c_{\theta,t}(i,j)]_{i,j\in\mathcal{N}}$ is the actor consensus matrix.

Compared with existing decentralized actor-critic methods for cooperative MARL (e.g., \citep{zhang2018fully}), 
the subclass of homogeneous MGs in Definition \ref{definition:Homogeneous Markov game}  makes it possible to perform actor consensus (i.e., policy consensus) in Equation (\ref{eq:actor_update}) that is not possible for general MGs.
Theorem \ref{theorem:distributed actor-critic} states the convergence of $\{\omega^i_t\}$ and $\{\theta^i_t\}$ generated by (\ref{eq:critic_update})(\ref{eq:actor_update}) with the linear critic class and under standard assumptions on the stepsizes, consensus matrices, and stability.

\begin{theorem}\label{theorem:distributed actor-critic}
Under standard assumptions for linear actor-critic methods with consensus update,
with $\{\omega^i_t\}$ and $\{\theta^i_t\}$ generated from Equations (\ref{eq:critic_update}) and (\ref{eq:actor_update}),
we have $\lim_t \omega^i_t = \omega_*$ and $\lim_t \theta^i_t = \theta_*$ almost surely for any $i\in\mathcal{N}$,
where $\theta_*$ is a stationary point associated with update (\ref{eq:actor_update}), and $\theta^i_*$ is the minimizer of the mean square projected Bellman error for the joint policy parameterized by $\theta_*$.
\end{theorem}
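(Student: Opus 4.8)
The plan is to establish the result through two-timescale stochastic approximation combined with a consensus (disagreement) analysis, following the template of \cite{zhang2018fully} for the critic but extending it to accommodate the novel \emph{actor} consensus that homogeneity makes possible. Throughout I invoke the standing assumptions: step sizes satisfying $\sum_t \beta_{\omega,t}=\sum_t\beta_{\theta,t}=\infty$, $\sum_t(\beta_{\omega,t}^2+\beta_{\theta,t}^2)<\infty$, together with the timescale separation $\beta_{\theta,t}/\beta_{\omega,t}\to 0$; doubly-stochastic consensus matrices $C_{\omega,t},C_{\theta,t}$ whose expected products contract the disagreement (a spectral-gap condition); an irreducible aperiodic induced Markov chain with linearly independent critic features; and almost-sure boundedness of the iterates.

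First I would analyze the critic on the fast timescale with the policy held fixed. Introducing the network average $\bar\omega_t=\frac1N\sum_{i}\omega^i_t$, double stochasticity of $C_{\omega,t}$ makes the consensus step average-preserving, so $\bar\omega_{t+1}=\bar\omega_t+\beta_{\omega,t}\frac1N\sum_i\delta^i_t\nabla_\omega Q(s_t,a_t;\omega^i_t)$. I would split each iterate as $\omega^i_t=\bar\omega_t+(\omega^i_t-\bar\omega_t)$ and control the disagreement $\omega^i_t-\bar\omega_t$: consensus contracts it geometrically while the bounded TD increments inject $O(\beta_{\omega,t})$ perturbations, so by square-summability of the step sizes the disagreement vanishes almost surely. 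For the average, the expected increment is the standard linear TD(0) direction $\bar g(\bar\omega_t,\theta_t)$, whose limiting ODE $\dot\omega=\bar g(\omega,\theta)$ is globally asymptotically stable with the unique equilibrium $\omega_*(\theta)$, the minimizer of the mean square projected Bellman error. The ODE method applied to the averaged recursion then yields $\bar\omega_t\to\omega_*(\theta_t)$ and hence $\omega^i_t\to\omega_*(\theta_t)$ for every $i$.

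Next I would treat the actor on the slow timescale, where by timescale separation the critic may be taken to have equilibrated at $\omega_*(\theta_t)$. Defining $\bar\theta_t=\frac1N\sum_i\theta^i_t$, double stochasticity of $C_{\theta,t}$ preserves it, and the same contraction-versus-perturbation argument drives the actor disagreement $\theta^i_t-\bar\theta_t$ to zero. The key new step is to show that feeding the \emph{local} critics into the increments $Q(s_t,a_t;\omega^i_t)\nabla_{\theta^i}\log\pi^i$ and averaging reproduces, up to vanishing error, the policy gradient of the shared-parameter objective: once $\omega^i_t\to\omega_*(\bar\theta_t)$ and $\theta^i_t\to\bar\theta_t$, the per-agent gradients are evaluated at a common parameter and a common critic, so their network average matches the gradient used for consensus. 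Here the homogeneity of Definition \ref{definition:Homogeneous Markov game}, in particular the permutation-preserving observation functions, is what makes the shared-parameter policy gradient well-defined and makes $\theta_*$ the target of a single gradient flow rather than an artifact of the sharing constraint. Invoking the ODE method with the standard Kushner--Clark conditions, $\bar\theta_t$ converges to the stationary-point set $\{\theta:\hat\nabla J(\theta)=0\}$ of the induced gradient ODE, and with the disagreement vanishing all $\theta^i_t$ converge to the same $\theta_*$.

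The main obstacle I anticipate is the joint control of actor consensus and the two-timescale coupling: unlike \cite{zhang2018fully}, where each agent retains its own policy and only the critic is averaged, here the actor increment at agent $i$ uses that agent's \emph{own} critic $\omega^i_t$, so the actor and critic disagreement analyses are entangled and the bias from still-disagreeing critics must be shown to be summable against the slow step size $\beta_{\theta,t}$. Carefully ordering the limits — critic consensus and equilibration on the fast scale, then actor consensus and gradient tracking on the slow scale — and verifying that the cross terms are $o(1)$ is the crux; the remaining verifications (stability of the TD ODE, boundedness, and measurability of the Markov noise) are routine adaptations of existing results.
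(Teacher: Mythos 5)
Your proposal is correct and follows essentially the same route as the paper: the same agreement/disagreement decomposition for both critic and actor, the same spectral-gap contraction of the disagreement against square-summable perturbations, the same fast-timescale ODE argument identifying the MSPBE minimizer, and the same slow-timescale Kushner--Clark argument for the actor average with the critic replaced by its equilibrated value $\omega_{\theta_t}$. The only cosmetic difference is that you assume doubly stochastic consensus matrices, whereas the paper gets by with row-stochastic matrices that are column-stochastic only in expectation; this does not change the structure of the argument.
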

Theorem \ref{theorem:distributed actor-critic} and its proof generalize the results in \cite{zhang2018fully} to the case where not only the critics but also the actors perform the consensus update.
Please refer to Appendix \ref{sec:Proof of theorem: distributed actor-critic} which provides the exact assumptions, the convergence points, and our proof.
While the actor-critic updates converge asymptotically both with and without actor consensus, obtaining their convergence rates require non-trivial finite-time analysis that remains an open problem.
In Appendix \ref{sec:Experiments on A Toy Example of Homogeneous MG}, we empirically compare the actor-critic updates with and without actor consensus on a toy example of homogeneous MG, with the results showing that the actor consensus slightly accelerates the convergence.

\section{Practical Algorithm}
\label{sec:Practical Algorithm}
\begin{table}
\caption{
Cooperative MARL settings in prior and our work.
A: Agent-specific reward,
T: Team shared reward,
FO: Fully observable,
JFO: Jointly fully observable,
PO: Partially observable.
}
\label{table:settings}
\centering
\begin{tabular}{|l|l|l|l|l|} 
\hline
                                                                            & {Reward } & {State observability } & {(De)Centralized } & {Memory-based policy}  \\ 
\hline
\begin{tabular}[c]{@{}l@{}}Section 3, and\\\cite{zhang2018fully}\end{tabular} & A                & FO                            & D               & No                            \\ 
\hline
Sections 4 and~ 5                                                           & A                & JFO                           & D          & No                            \\ 
\hline
\begin{tabular}[c]{@{}l@{}}Example paper:\\ \cite{lowe2017multi}\end{tabular} & A                & JFO                           & C                         & No                            \\ 
\hline
\begin{tabular}[c]{@{}l@{}}Example paper:\\ \cite{rashid2018qmix}\end{tabular} & T                & PO                            & C                         & Yes                           \\
\hline
\end{tabular}
\end{table}
The convergence of our decentralized actor-critic algorithm in Section \ref{sec:Policy Consensus for Homogeneous MGs} relies on the assumptions of linear function approximators, full observability, and well-connected consensus.
In this section, we develop a practical algorithm that relaxes these assumptions and achieves communication efficiency.
Specifically, the decentralized actors and critics are represented by neural networks.
We consider the partial observability setting where the agents cannot directly observe the global state $s_t$ such that their observation functions $\{o^i\}$ are not bijective.
Further, similar to Network MARL \citep{zhang2018fully}, we assume the agents can communicate through a time-variant network $\mathcal{G}_t:=(\mathcal{N}, \mathcal{E}_t)$ with vertex set $\mathcal{N}$ and directed edge set $\mathcal{E}_t\subseteq \{(i,j): i,j\in\mathcal{N}, i\neq j\}$.
Denote the neighbors of agent $i$ at time step $t$ as $\mathcal{N}_t^i:=\{j: (i, j)\in\mathcal{E}_t\}$.
With agents only partially observing the global state, we in Section \ref{sec:Efficient observation-action communication} develop an architecture for the agents to learn to share their local observations and actions in a communication efficient manner.
To achieve communication efficiency on the actor-critic parameter consensus, in Section \ref{sec: bi-level bandit} we develop an effective bi-level multi-armed bandit for the agents to learn to exchange the parameters only when it benefits learning.
Table \ref{table:settings} summarizes the differences between the problem settings considered in Sections \ref{sec:Homogeneous Markov Game} and \ref{sec:Practical Algorithm}, as well as in the literature.
Below we describe our key design choices, and Appendix \ref{sec:Implementation Details} provides implementation details of our algorithm.

\subsection{Efficient observation-action communication}
\label{sec:Efficient observation-action communication}
We primarily focus on the type of partial observability where the state is not fully observably by individual agents but jointly observable, i.e., the mapping from $s_t$ to $\{o^i(s_t)\}_{i\in\mathcal{N}}$ is bijective.
For example, this joint observability is satisfied in Cooperative Navigation, where all agents' observations can determine the state.
Thus, each agent can use the observations and actions of its own as well as from its neighbors for its critic to approximate the global action-value.
To encourage communication efficiency, we propose an architecture, {\em communication network}, that selects a subset $\mathcal{C}^i_t\subseteq\mathcal{N}^i_t$ for observation-action communication, such that agent $i$'s critic becomes $Q^i(\{(o^k_t,a^k_t)\}_{k\in\{i\} \cup \mathcal{C}^i_t})$.
For the texts below, we abuse notation $o^i_t:=o^i(s_t)$ to denote the observation and omit the subscript of time step $t$ when the context is clear.

\textbf{Communication network.} 
The communication network $C^i$ of agent $i$ outputs $c^i_j\in \{0,1\}$ indicating whether to communicate with neighbor $j\in\mathcal{N}^i$, i.e., $\mathcal{C}^i=\{j\in\mathcal{N}^i: c^i_j =1\}$. Specifically, we choose an $L$-layer graph convolutional networks (GCN) to implement $C^i$, which can deal with arbitrary input size determined by $|\mathcal{N}^i|$ and achieve permutation invariance. 
Specifically, the input to this GCN is a fully connected graph with one vertex $(o^i, e^i_j(o^i))$ per neighbor $j\in\mathcal{N}^i$, where $e^i_j(o^i)$ embeds information of neighbor $j$ that can be extracted from $o^i$.
For example, when agents' identities are observable, $e^i_j(o^i)$ can be (the embedding of) the ID of neighbor $j$. When identities are not observable, $e^i_j(o^i)$ preserves information specific to neighbor $j$, such as
the physical distance from $j$ to $i$ in MPE.  
The GCN's last layer outputs the logits $\{l^i_j\}$ from which $\{c^i_j\}_{j\in\mathcal{N}^i}$ are sampled. 
To enable differentiability of the sampling, we use the reparameterization trick Straight-Through Gumbel-Softmax \citep{jang2016categorical}.

\textbf{Actor and critic networks.}
Our proposed communication network is compatible with any multi-agent actor-critic architecture.
Our experiments mainly explore deterministic actors $a^i = \pi^i(o^i)$.
Similar to the communication network, the critic network $Q^i(\{(o^k,a^k)\}_{k\in\{i\} \cup \mathcal{C}^i})$ is also implemented by an $L$-layer GCN to deal with arbitrary input size and achieve permutation invariance, where the input to the first layer is the fully connected graph with vertices $\{(o^k,a^k)\}_{k\in\{i\} \cup \mathcal{C}^i}$.

\textbf{Training.}
Critic $Q^i$ directly guides agent $i$'s actor update using the deterministic policy gradient.
Critic $Q^i$ itself is updated to minimize the TD loss 
$
    \mathcal{L}^i_{\rm TD} = \mathbb{E}_{o_t,a_t,r_t,o_{t+1}}[(Q^i_t-y^i_t)^2]
$,
where 
$o_t:=(o^1_t,...,o^N_t)$ is the joint observation,
$a_t:=(a^1_t,...,a^N_t)$ is the joint action,
$Q^i_t:=Q^i(\{(o^k_t,a^k_t)\}_{k\in\{i\} \cup \mathcal{C}^i_t})$ is the abbreviated notation for the critic value of agent $i$ at timestep $t$,
and $y^i_t := r^i_t + \gamma Q^i_{t+1}$ is the TD target.
Due to the differentiability enabled by Gumbel-Softmax, the gradient can flow from $Q^i$ to communication network $C^i$. 
Commonly used in the literature \citep{jang2016categorical}, the update of $C^i$ is guided by a regularization term $\alpha|\frac{1}{|\mathcal{C}^i_t|}\sum_{j\in\mathcal{C}^i_t} \text{Softmax}(l^i_j)-\eta|$, which places a restriction on the amount of communication allowed defined by rate $\eta$. 

\subsection{A bi-level bandit for parameter consensus}
\raggedbottom
\label{sec: bi-level bandit}
The parameter consensus defined in Equations (\ref{eq:critic_update})(\ref{eq:actor_update}) requires each agent $i$ to communicate with all other agents $j$ where the $(i, j)$ entry of the consensus matrix is non-zero.
To achieve communication efficiency, existing literature mainly considers gossip algorithms where each agent only communicated with one neighbor $j$ per communication round \citep{boyd2006randomized}, i.e., the consensus matrix entries satisfy $c(i,j)=c(i,i)=1/2$.
Here, we develop a novel bi-level multi-armed bandit to further improve communication efficiency over gossip algorithms, where at each round each agent chooses {\em whether or not} to perform consensus at the high level, and if yes, chooses which neighbor to perform gossip consensus update.
For ease of exposition, we assume that 
1) the agents perform a gradient step and decide whether and how to perform parameter consensus every episode indexed by $m=1,2,...$, 
and 2) every agent can always choose from all other agents for parameter consensus, i.e., $\mathcal{N}^i_t = \mathcal{N} \setminus \{i\}$.
Extensions to more general settings are straightforward.
We next formally describe this bi-level bandit for an arbitrary agent $i$, dropping the superscript $i$ for convenience.

\textbf{Arms.}
The high-level is a $2$-armed bandit determining whether to perform consensus at each round $m$.
Denote the selected high-level arm at round $m$ as $x^{1}_m$, where $x^{1}_m=0, 1$ corresponds to performing and not-performing consensus, respectively.
The low-level is a $(N-1)$-armed bandit, and we let $x^{2}_m\in \mathcal{N}\setminus\{i\}$ denote the selected low-level arm at round $m$.

\textbf{Rewards.}
We design different reward functions for the arms in the two levels, as they have different goals. The high-level bandit aims to 1) reduce communication while 2) maintaining a reasonable performance of the learned consensus matrices. Requirement 2) can be captured by the difference of the episodic rewards at different time steps, and requirement 1) can be measured by the frequency of selecting to perform consensus.
Let $G^{m}$ be the total rewards of episode $m$.
To compute the reward $r_m^1$ for the high-level, we first normalize $G^{m}$ using the latest $l$ episodes high-level records to fulfill requirement 2), followed by the rewarding or penalizing depending on the sign of normalized $G^{m}$ to fulfill requirement 1), and finally mapped to to $[-1,1]$. The low-level bandit only considers the performance of the learned policy, and the reward $r_m^2$ can be computed by the normalization of $G^{m}$ using latest $l$ episodes low-level records, then be mapped to $[-1,1]$. 
Equation (\ref{eq:bandit_reward}) shows the details.

\begin{minipage}{0.45\textwidth}
  \begin{align*}
      &r_m^1 \gets \frac{G^{m} - {\rm mean} (G^{(m-l+1):m})}{{\rm std}(G^{(m-l+1):m})}
\\&r_m^1\gets\begin{cases}
    r_m^1 / \sum_{l'=1}^{l} \mathbf{1}[x^{1}_{l'}=0], & \text{if $r_m\geq 0$}.\\
    r_m^1 \gets r_m^1 / \sum_{l'=1}^{l}\mathbf{1}[x^{1}_{l'}=1], & \text{otherwise}.
  \end{cases}
  \\&r_m^1 \gets 2\cdot{\rm sigmoid}(r_m^1)-1 
    \end{align*}
\end{minipage}
\hfill
\begin{minipage}{0.45\textwidth}
    \begin{equation}
\begin{split}
&r_m^2 \gets \frac{G^{m} - {\rm mean} (G^{(m-l+1):m})}{{\rm std}(G^{(m-l+1):m})} \label{eq:bandit_reward}
  \\&r_m^2 \gets 2\cdot{\rm sigmoid}(r_m^2)-1 
  \end{split}
\end{equation}
\end{minipage}

The reward functions designed above are in nature non-stationary, since the agents are continuously updating their policies, which directly influence the values of the episodic reward difference. Here we choose the adversarial bandit algorithm Exponentially Weighted Average Forecasting \citep{book} to learn the bi-level bandit.


\section{Experiments}
\label{sec:Experiments}
Our experiments aim to answer the following questions in Sections \ref{sec:Communication efficiency against baselines and ablations}-\ref{sec:Qualitative analyses of the learned communication rule}, respectively:
1) How communication-efficient is our algorithm proposed in Section \ref{sec:Practical Algorithm} against baselines and ablations?
2) How empirically effective is policy consensus? Specifically, compared with not using policy consensus, can policy consensus converge to better joint policies faster?  
3) What are the qualitative properties of the learned communication rules?

\textbf{Environments.}
We evaluate our algorithm on three tasks in Multi-Agent Particle Environment (MPE) with the efficient implementation by \cite{liu2020pic}, each of which has a version with $N=15$ agents and another with $N = 30$ agents.
As described in Section \ref{sec: Homogeneous MG: Definition,  Properties, and Examples}, these MPE environments can be cast as homogeneous MGs provided full observability and the permutation preserving observation functions.
We set the communication range to be $k = 10 < N$ nearest agents for all the environments to introduce partial observability.
The details of the observation functions in each environment are as follows. 
\textit{Cooperative Navigation:} There are $15$, $30$ landmarks for $N = 15, 30$ respectively. The observation of an agent contains its own absolute location, the relative locations of the nearest $10$ agents, and the relative location of the $11$ nearest landmarks. 
\textit{Cooperative Push:} 
$N$ cooperating agents are tasked to push a large ball to a target position.
There are $2, 2$ landmarks for $N = 15, 30$ respectively. The observation of an agent contains its own absolute location, the relative locations of the nearest $10$ agents, and the relative location of the $2$ landmarks. 
\textit{Predator-and-Prey:}
$N$ cooperating  predators (agents) are tasked to capture $p$ preys.
The preys are pre-trained and controlled by the environment. There are $5, 10$ preys and $5, 10$ landmarks (blocks) for $N = 15, 30$ respectively. Each predator can see $p = 3, 5$ preys, $l= 3, 5$ landmarks, $k=10, 10$ other predators for $N = 15, 30$ respectively. The observation of an agent contains its own absolute location, the relative locations of the nearest $k$ agents (predators), the nearest $p$ preys, and the relative location of the $l$ nearest landmark.

\textbf{Baselines.}
We use Permutation Invariant Critic \citep{liu2020pic}, the state-of-the-art CTDE actor-critic algorithm on MPE, as our centralized training algorithm to derive the following decentralized training baselines.
\textbf{Full-Communication} employs all-to-all communication for both observation-action and parameters, i.e., at each time step, each agent receives the observations and actions from all other neighbors for its critic, as well as critic and policy parameters from all other neighbors for its consensus update.
Independent learning (\textbf{IL}) employs no communication, i.e., each agent uses its local observations and actions only for its critic and performs no consensus update. 
\textbf{Random} selects at random 1) a subset of neighbors for observation-action sharing and 2) a single neighbor for gossip parameter consensus.
For parameter consensus, we consider a \textbf{Rule-based} baseline, where each agent saves a copy of the parameters from the latest communications with other agents. When agent $i$ considers doing communication at $t$, it calculates the $l_1$ norm of the parameters of $Q^i$ and $\hat{Q}^j$ which is the latest copy of $j$'s parameters it saves. The norm serves as the score to rank the other agents. Intuitively, the high parameter difference implies dramatic behavioral differences. The agent $j$ with the highest score is selected to do parameter consensus, and the copy of agent $j$'s critic parameter is recorded by agent $i$. 
For fair comparisons, the Random and Rule-based baselines incur the same communication cost (i.e., the fraction of neighbors for observation-action communication, and the frequency for parameter consensus) as learned by our bandit method. 

\begin{figure}[tb]
\centering
\begin{subfigure}[b]{1\textwidth}
  \centering
  \includegraphics[width=\linewidth]{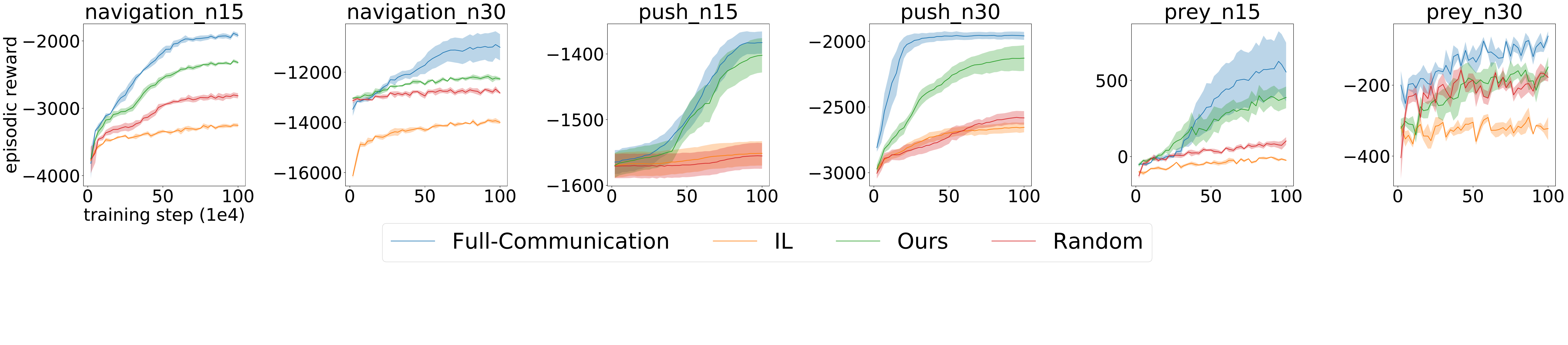}
\end{subfigure}
\vspace*{-10mm}\caption{Comparison of our algorithm with Full-Communication, IL, and Random.
}
\label{fig:Baselines}
\end{figure}

\subsection{Communication efficiency against baselines and ablations}
\label{sec:Communication efficiency against baselines and ablations}

Figure \ref{fig:Baselines} shows the learning curves comparing our communication-efficient decentralized actor-critic algorithm described in Section \ref{sec:Practical Algorithm} against the baselines of Full-Communication, IL, and Random, with the observation-action communication rate set as $50\%$. 
The results clearly verify the importance of observation-action and parameter communication, as Full-Communication outperforms IL by significant margins uniformly in all the tasks.
As will be confirmed in Section \ref{sec:Qualitative analyses of the learned communication rule}, our algorithm complies with the given observation-action communication rate, and its bandit learning chooses to communicate with a neighbor for parameter consensus only roughly $95\%$ less frequently than Full-Communication.
Our algorithm significantly outperforms the Random baseline that uses the same amount of communication in almost all the tasks, and achieves performances comparable with Full-Communication in several tasks.
Remarkably, Random barely outperforms IL in Cooperative Push, suggesting that efficient communication for decentralized training is challenging.

\textbf{Ablation: parameter communication.}
We perform ablations on the three environments with $N=15$ agents.
Fix the observation-action communication pattern to be learned by the proposed communication network, we can see in Figure \ref{fig:ablation_bandit} that parameter consensus strategy learned by our bi-level bandit outperforms the Random parameter consensus baseline, the Rule-based parameter consensus strategy, and even the parameter consensus using Full-Communication in Predator-and-Prey, and behaves comparably to parameter consensus using Full-Communication in Cooperative Push, using random parameter consensus and Rule-based parameter consensus in Cooperative Navigation. Noticeably, the Rule-based algorithm behaves similar to the Random parameter consensus baseline in the three scenarios. A possible explanation is that the parameters of the other agents an agent records are outdated, as the Rule-based algorithm uses the same communication frequency (around 95\%) learned by the bi-level bandit. Another explanation is that parameter consensus harms exploration, and the balance of them cannot be handled by the Rule-based algorithm which only considers marinating homogeneous behaviors between agents. 

\textbf{Ablation: observation-action communication.}
In Appendix \ref{sec:Experiments with Various Amounts Of Communication}, we also experiment with communication rate other than 50\%, and the results show that our method dominate the baselines across various choices for the communication rate.

\begin{figure}
    \centering
    \begin{minipage}[t]{0.49\textwidth}
        \centering
        \includegraphics[width=1\textwidth]{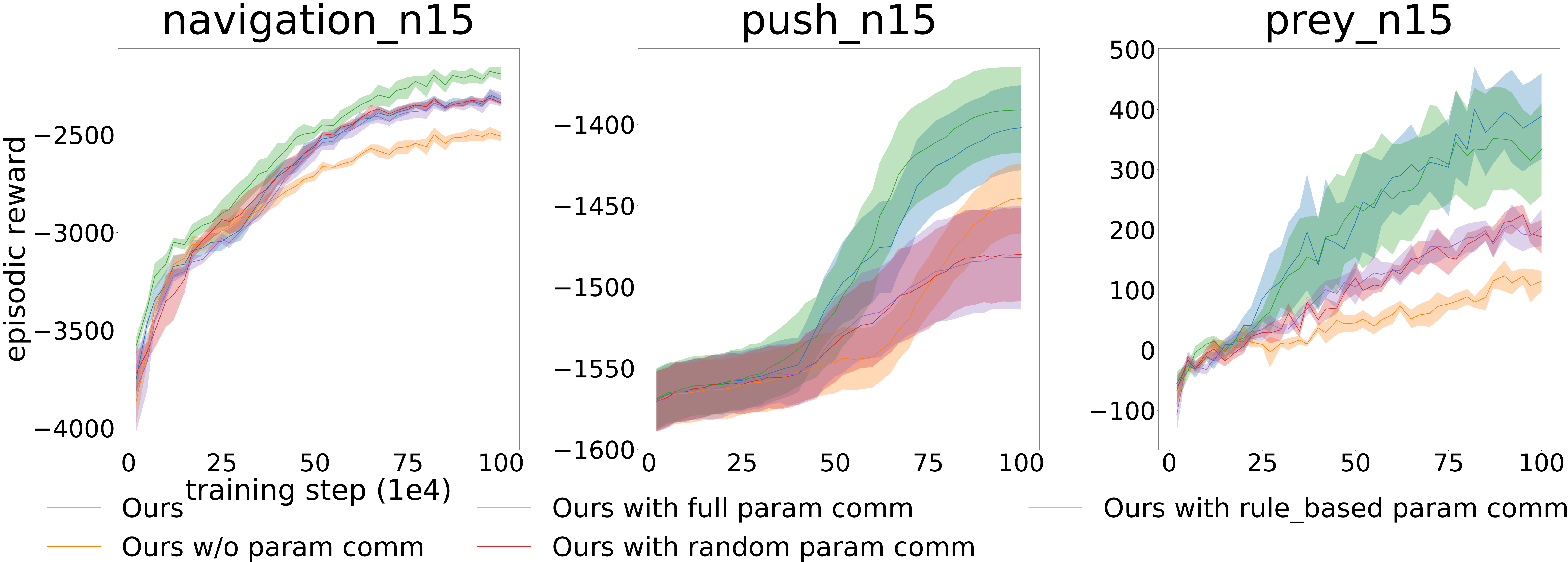} 
        \caption{Comparison of parameter consensus by our bandit with Random and Rule-based.}
        \label{fig:ablation_bandit}
    \end{minipage}\hfill
    \begin{minipage}[t]{0.49\textwidth}
        \centering
        \includegraphics[width=1\textwidth]{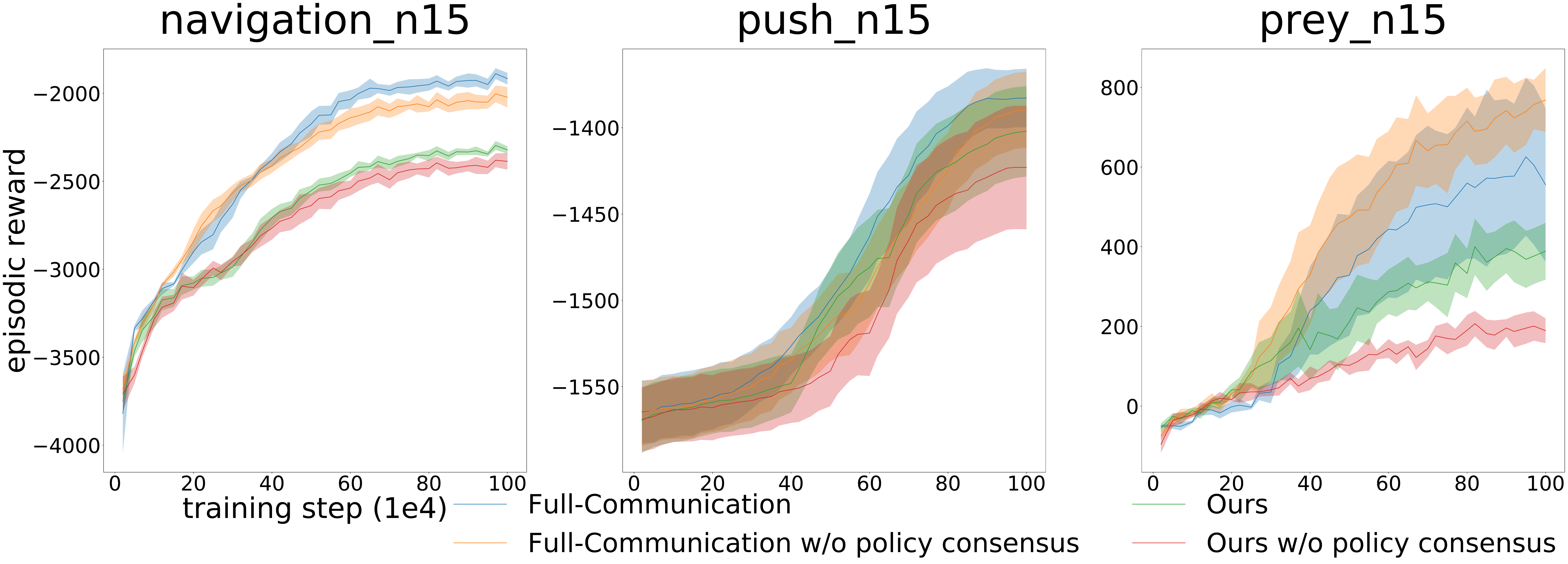}
        \caption{Comparison of learning with and w/o policy consensus.}
\label{fig:policy_consensus}
    \end{minipage}
\end{figure}
\subsection{Effectiveness of policy consensus}
\label{sec:Effectiveness of policy consensus}
Theorem \ref{theorem:distributed actor-critic} assures that policy consensus in distributed actor-critic methods for Homogeneous MGs provably converges to a local optimum, yet it remains open questions whether such a local is good and whether the convergence is fast.
Our experiments in this subsection empirically investigate these questions by comparing the learning curves with and without policy parameter consensus.
We separately consider 
1) the setting where the agents employ all-to-all communication for their critics and parameter consensus, such that the assumptions of full observability and well-connected consensus matrices in Theorem \ref{theorem:distributed actor-critic} are satisfied,
and 2) the setting where the agents learn to communicate efficiently with our communication network and the bandit. 
The results in Figure \ref{fig:policy_consensus} show that policy parameter consensus is always beneficial with our communication-efficient method and, surprisingly, it can negatively impact the training with all-to-all communication (e.g., Predator-and-Prey).
One plausible explanation is that, while it might speed up convergence, policy parameter consensus can harm exploration, leading to worse local optima.

\begin{figure}[tb]
\centering
\begin{subfigure}[b]{1\textwidth}
  \centering
  \includegraphics[width=\textwidth]{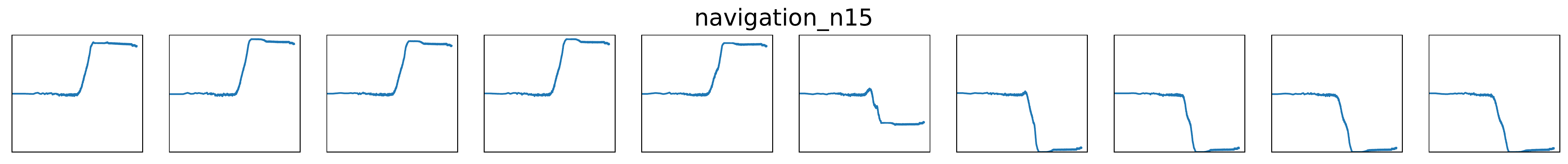}
\end{subfigure}
\caption{
Y: communication rate.
X: training step (log scale). 
Communication rate for the 10 neighbors in Cooperative Navigation ($N=15$), with the distance increasing from left to right.}
\label{fig:Gate_open_rate}
\end{figure}
\begin{wrapfigure}{R}{0.3\textwidth}
  \begin{center}
    \includegraphics[width=\linewidth]{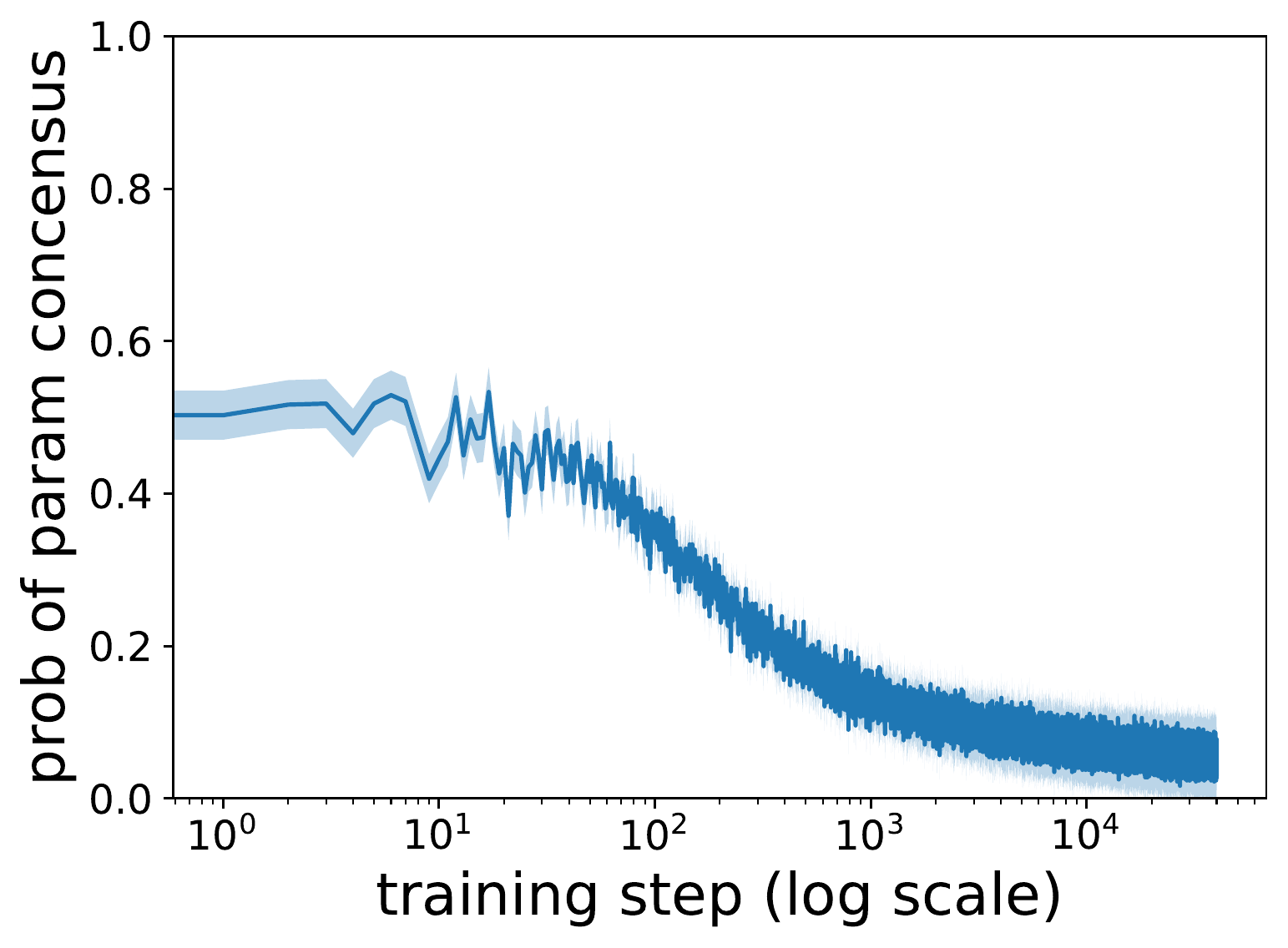}
  \end{center}
  \vspace*{-3mm}\caption{Probability of communication in the high-level bandit across all environments and seeds.}
\label{fig:weight_avg_rate}
\end{wrapfigure}

\subsection{Qualitative analyses of the learned communication rule}
\label{sec:Qualitative analyses of the learned communication rule}
We first qualitatively examine the observation-action communication rule learned by our communication network.
We plot the output of our communication network, i.e., the probabilities of communicating with the $k=10$ distance-sorted neighbors, as the learning progresses.
Interestingly, the rule learned by our communication network encourages communicating with the nearest neighbors.
For example, in Cooperative Navigation with the 50\% communication rate as shown in Figure \ref{fig:Gate_open_rate}, the probabilities of agents communicating with nearest 5 neighbors are over 75\%, around 25\% for the 6th nearest agent, around 0\% for the other neighbors.
We provide the counterparts of Figure \ref{fig:Gate_open_rate} for all the environments in Appendix \ref{sec:Visualization of the learned communication rule}.

Figure \ref{fig:weight_avg_rate} shows the probability of choosing the high-level arm of performing consensus as the learning progresses, averaged across the runs of all the environments.
The result shows that, with the designed bandit's reward function, the average probability of selecting to communicate decreases from around 50\% to less than 10\%.

\section{Conclusion}
In this paper, we characterize a subclass of cooperative Markov games where the agents exhibit a certain form of homogeneity such that policy sharing provably incurs no loss of optimality.
We develop the first multi-agent actor-critic algorithm for homogeneous MGs that enjoys asymptotic convergence guarantee with decentralized policy consensus.
For practical usage, we propose techniques that can efficiently learn to communicate with other agents in exchange of observations, actions, and parameters. 
The empirical results show that our proposed algorithm performs better than several baselines in terms of communication efficiency.

\newpage\clearpage
\section*{Acknowledgement}
We thank the anonymous reviewers for their thoughtful comments and supportive discussion.
We thank Yan Zhang for an early discussion on this work.
\bibliography{iclr2022_conference}
\bibliographystyle{iclr2022_conference}

\newpage\clearpage
\appendix
\section{Proof of Theorem \ref{theorem:Homogeneous MG}}
\label{sec:Proof of theorem: Homogeneous MG}
The set of bijections $\{o^i\}_{i\in\mathcal{N}}$ induces a one-to-one mapping between $\Pi$ and $\Pi_o$, and therefore the first equality holds.
For the second equality, consider an arbitrary state $s=(s^1,...,s^N)\in\mathcal{S}$ and the permutation $M$ that swaps a pair of agents $(i,j)$, such that $
    Ms = M(...,s^i,...,s^j,...) = (...,(Ms)^i=s^j,...,(Ms)^j=s^i,...).
$.
Due to the permutation invariance of the transition and reward functions by condition \ref{item:permutation invariant} of Definition \ref{definition:Homogeneous Markov game}, there exists an optimal state-based joint policy $\pi_{*}\in\Pi$ such that $\pi^i_*(\cdot|s) = \pi^j_*(\cdot|Ms)$.
Consider the corresponding optimal observation-based joint policy $\pi_{*o}\in\Pi_o$ that is the bijective mapping of $\pi_{*}$, such that $\pi^i_{*o}(\cdot|o^i(s)) = \pi^i_*(\cdot|s)$
and $\pi^j_{*o}(\cdot|o^j(Ms)) = \pi^j_*(\cdot|Ms)$.
We therefore have
\begin{align}\label{eq:homogeneous_identity}
    \pi^i_{*o}(\cdot|o^i(s)) = \pi^j_{*o}(\cdot|o^j(Ms)).
\end{align}
Further, since $\{o^i\}_{i\in\mathcal{N}}$ are permutation preserving by condition \ref{item:permutation preserving} of Definition \ref{definition:Homogeneous Markov game}, we have $o^i(s) = o^j(Ms)\in\mathcal{O}$ in Equation (\ref{eq:homogeneous_identity}).
Since Equation (\ref{eq:homogeneous_identity}) holds for arbitrary $s\in\mathcal{S}$ and $i,j\in\mathcal{N}$, and thus it follows that \begin{align}\label{eq:observation_based_policy_sharing}
  \pi^i_{*o}(\cdot|o) = \pi^j_{*o}(\cdot|o)~\forall o\in\mathcal{O},
\end{align}
i.e., the second equality holds.
This concludes the proof.

\subsection{Illustrative example for the proof}
\label{sec:Illustrative example for the proof}
\begin{figure}[h]
\centering
\includegraphics[width=.7\textwidth]{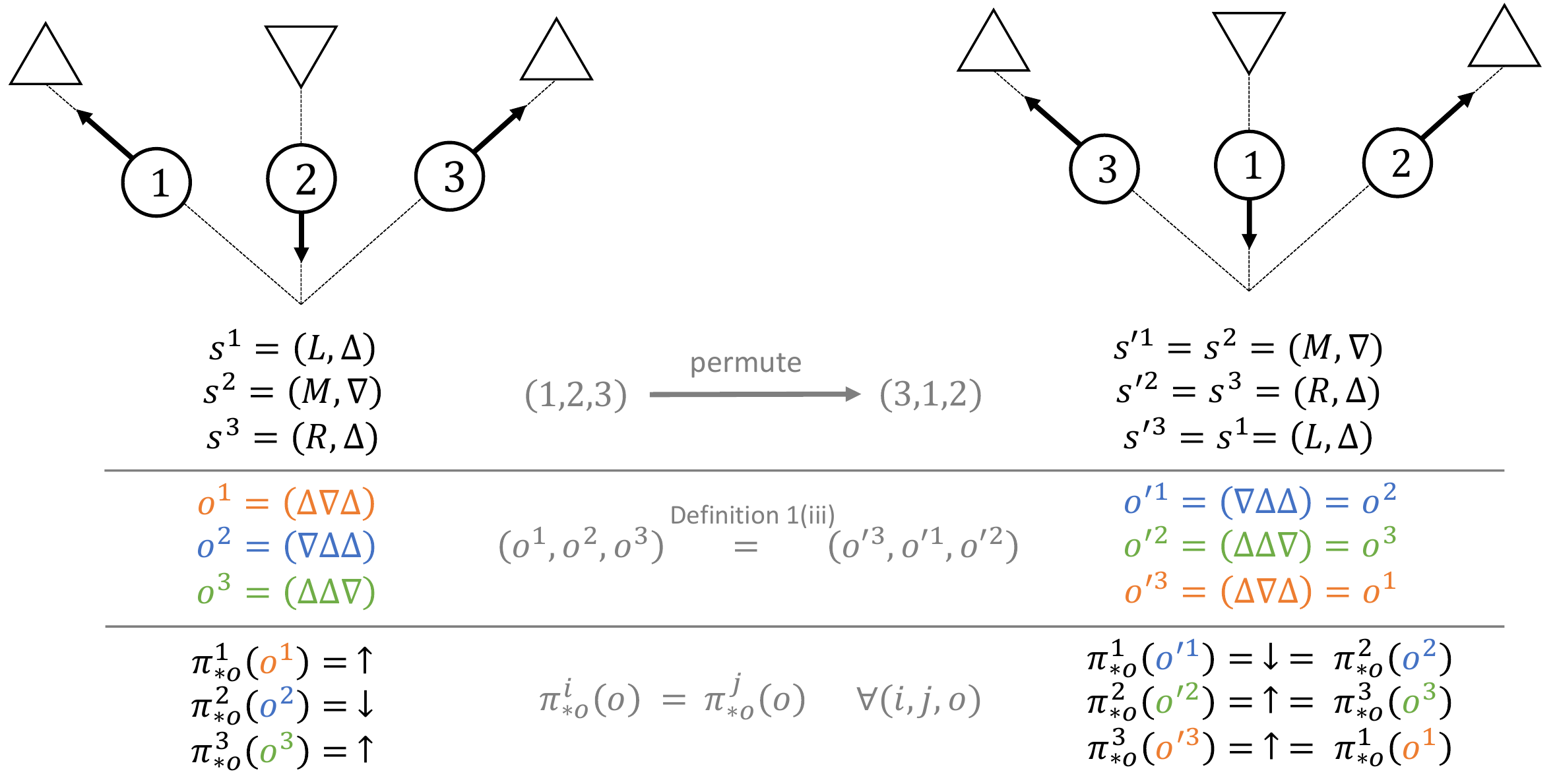}
\caption{An illustrative example for the proof of Theorem \ref{theorem:Homogeneous MG}. Please see the text for details.}
\label{fig:theorem proof}
\end{figure}
We here provide an illustrative example in Figure \ref{fig:theorem proof} to aid the proof.
The example Markov game consists of three agents $i\in\{1,2,3\}$ placed on the left ($L$), middle ($M$), and right ($R$), one in each position, with a triangle placed in front of each agent that is either pointing up ($\triangle$) or down ($\bigtriangledown$).
The agents have homogeneous action spaces $\{\uparrow, \downarrow\}$.

The agents also have homogeneous local state spaces $\{L,M,R\} \times \{\triangle, \bigtriangledown\}$, repenting its position and the shape in front.
The game ends after the first timestep, and the agents share the following reward function: 
1) if the number of the $\triangle$ is even, reward is +1 when the agents behind $\triangle$ choose to go up ($\uparrow$) and the agents behind $\bigtriangledown$ choose to go down ($\downarrow$);
2) if the number of the $\triangle$ is odd, reward is +1 when the agents behind $\triangle$ choose to go down ($\downarrow$) and the agents behind $\bigtriangledown$ choose to go up ($\uparrow$);
3) reward is 0 otherwise.
Thus, the Markov game satisfies Definition \ref{definition:Homogeneous Markov game}\ref{item:homogeneous state-action spaces}\ref{item:permutation invariant}. 

The agents' local observations preserve their absolute positions (i.e., $L$, $M$, or $R$) and consist the three shapes ordered clockwise starting from the shape right in front.
For example, the local observation of agent $i=1$ on the left of Figure \ref{fig:theorem proof} is $o^1=(L, \triangle\bigtriangledown\triangle)$.
As verified by the second row in Figure \ref{fig:theorem proof}, the local observations are permutation preserving to satisfy Definition \ref{definition:Homogeneous Markov game}\ref{item:permutation preserving}.
Therefore, the Markov game is homogeneous by Definition \ref{definition:Homogeneous Markov game}.

The third row in Figure \ref{fig:theorem proof} verifies Equations (\ref{eq:homogeneous_identity})(\ref{eq:observation_based_policy_sharing}) that are key to the proof.

\newpage\clearpage
\section{Proof of Theorem \ref{theorem:distributed actor-critic}}
\label{sec:Proof of theorem: distributed actor-critic}
\subsection{Assumptions}
We make the following assumptions that are necessary to establish the convergence.

\begin{assumption}
\label{assumption: Markov game}
The Markov game has finite state and action spaces and bounded rewards.
Further, for any joint policy, the induced Markov chain is irreducible and aperiodic.
\end{assumption}

\begin{assumption}
\label{assumption:features}
The critic class is linear, 
i.e., $Q(s,a;\omega) =\phi(s,a)^\top \omega $, where $\phi(s,a)\in\mathbb{R}^K$ is the feature of $(s,a)$.
Further, the feature vectors $\phi(s,a)\in\mathbb{R}^K$ are uniformly bounded by any $(s,a)$.
The feature matrix $\Phi\in\mathbb{R}^{|\mathcal{S}| |\mathcal{A}|\times K}$ has full column rank.
\end{assumption}

\begin{assumption}
\label{assumption:stepsizes}
The stepsizes $\beta_{\omega,t}$ and $\beta_{\theta,t}$ satisfy
\begin{align*}
    \textstyle\sum_t \beta_{\omega,t} = \textstyle\sum_t \beta_{\theta,t} = \infty,
    \quad 
    \textstyle\sum_t \beta_{\omega,t}^2 < \infty,
    \quad
    \textstyle\sum_t \beta_{\theta,t}^2 < \infty,
    \quad
    \beta_{\theta,t} = o\left(\beta_{\omega,t}\right).
\end{align*}
In addition, $\lim_t \beta_{\omega,t+1}\beta_{\omega,t}^{-1} =1$.
\end{assumption}

\begin{assumption}
\label{assumption:weight matrix}
We assume the nonnegative matrices $C_t\in\{C_{\omega,t},C_{\theta,t}\}$ satisfy the following conditions: 
(i) $C_t$ is row stochastic (i.e., $C_t \mathbf{1} = \mathbf{1}$) and $\E[C_t]$ is column stochastic (i.e., $\mathbf{1}^\top \E[C_t] = \mathbf{1}^\top$ ) for all $t>0$;
(ii) The spectral norm of $\E[(W_t^\top( I-\frac{1}{N}\mathbf{1}\mathbf{1}^\top) W_t]$ is strictly smaller than one;
(iii) $W_t$ and $(s_t, \{r^i_t\})$ are conditionally independent given the $\sigma$-algebra generated by the random variables before time $t$.
\end{assumption}

\begin{assumption}
\label{assumption:stability}
The critic update is stable, i.e.,
$\sup_t \norm{\omega^i_t}<\infty$, for all $i$.
For the actor update, $\{\theta^i_t\}$ belongs to a compact set for all $i$ and $t$.
\end{assumption}

\subsection{Critic convergence}
In this subsection, we establish critic convergence under a fixed joint policy in Lemma \ref{lemma:critic}.
Specifically, given a fixed joint policy $\pi = (\pi^1,...,\pi^N)$,
we aim to show that the critic update converges to $\omega_\pi$, which is the unique solution to the Mean Square Projected Bellman Error (MSPBE):
\begin{align*}
   \omega_\pi =  \argmin_{\omega} \norm{\Phi\omega - \Pi T_\pi(\Phi\omega)}^2_{D_\pi},
\end{align*}
which also satisfies
\begin{align*}
   \Phi^\top D_\pi \left[T_\pi(\Phi\omega_\pi) - \Phi\omega_\pi\right]=0,
\end{align*}
where
$T_\pi$ is the Bellman operator for $\pi$,
$\Pi$ is the projection operator for the column space of $\Phi$,
and $D_\pi = {\rm diag}[d_\pi(s,a): s\in\mathcal{S}, a\in\mathcal{A}]$ for the stationary distribution $d_\pi$ induced by $\pi$.

\begin{lemma}\label{lemma:critic}
Under the assumptions ,
for any give joint policy $\pi$,
with distributed critic parameters ${\omega^i_t}$ generated from Equation \ref{eq:critic_update} using on-policy transitions $(s_t,a_t,r_t,s_{t+1},a_{t+1})\sim\pi$,
we have $\lim_t \omega^i_t = \omega_\pi$ almost surely (a.s.) for any $i\in\mathcal{N}$, where $\omega_\pi$ is the MSPBE minimizer for joint policy $\pi$.
\end{lemma}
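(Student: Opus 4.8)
The plan is to follow the ODE-based stochastic-approximation argument for consensus TD learning, decomposing the dynamics into a fast-vanishing \emph{disagreement} component and an \emph{average} component that tracks a centralized TD(0) recursion. First I would specialize the update to the linear critic, where $\nabla_\omega Q(s,a;\omega)=\phi(s,a)$, and stack the per-agent parameters into $\omega_t=(\omega^1_t,\dots,\omega^N_t)$ so that one consensus step reads $\omega_{t+1}=(C_{\omega,t}\otimes I)(\omega_t+\beta_{\omega,t}h_t)$, with $h_t$ the stacked local TD increments. I would then record the limiting objects: the TD system matrix $A=\E_{d_\pi}[\phi(s,a)(\gamma\phi(s',a')-\phi(s,a))^\top]$ and vector $b=\E_{d_\pi}[r(s,a)\phi(s,a)]$, where the average of the local rewards satisfies $\tfrac1N\sum_i r^i_t=r(s_t,a_t)$. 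Assumption~\ref{assumption:features} (full-rank $\Phi$) makes $A$ Hurwitz, its unique root $-A^{-1}b$ coincides with the MSPBE minimizer $\omega_\pi$, and the mean-field ODE $\dot\omega=A\omega+b$ is globally asymptotically stable.

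Next I would control the disagreement $\omega^i_t-\bar\omega_t$, where $\bar\omega_t=\tfrac1N\sum_i\omega^i_t$. Since each $C_{\omega,t}$ is row stochastic, the consensus step leaves the consensus subspace invariant, and the spectral-gap condition of Assumption~\ref{assumption:weight matrix}(ii) gives a uniform contraction of the orthogonal (disagreement) component in expectation. Combining this contraction with the bounded increments guaranteed by Assumptions~\ref{assumption: Markov game} and \ref{assumption:stability} and the square-summable stepsizes of Assumption~\ref{assumption:stepsizes}, a standard supermartingale / Robbins--Siegmund argument shows $\max_i\norm{\omega^i_t-\bar\omega_t}\to 0$ almost surely, and in fact fast enough that the disagreement enters subsequent recursions as an asymptotically negligible perturbation.

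Then I would derive the averaged recursion. Because $\E[C_{\omega,t}]$ is column stochastic, applying $\tfrac1N(\mathbf{1}^\top\otimes I)$ to the stacked update shows that consensus preserves $\bar\omega_t$ up to a martingale-difference term with summable conditional second moment, so that
\begin{align*}
\bar\omega_{t+1}=\bar\omega_t+\beta_{\omega,t}\Big(\tfrac1N\textstyle\sum_i\delta^i_t\,\phi(s_t,a_t)\Big)+\text{(noise)}.
\end{align*}
Using the vanishing disagreement from the previous step, each $\delta^i_t$ may be replaced by the TD error evaluated at $\bar\omega_t$ up to an $o(1)$ term, so the bracketed drift equals $A\bar\omega_t+b$ plus martingale-difference noise and an asymptotically negligible bias. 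Invoking the Borkar--Meyn ODE theorem (with stability from Assumption~\ref{assumption:stability}) then yields $\bar\omega_t\to\omega_\pi$ almost surely, and combined with the vanishing disagreement this gives $\omega^i_t\to\omega_\pi$ for every $i\in\mathcal{N}$.

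The main obstacle I anticipate is the coupling between the two components: the averaged drift is driven by local TD errors $\delta^i_t$ that use the \emph{individual} iterates $\omega^i_t$ rather than $\bar\omega_t$, so the ODE argument for the average is only valid once the disagreement is shown to vanish at a rate compatible with the stepsize schedule. Making the two estimates mutually consistent --- controlling the disagreement while treating the average as slowly varying, and ensuring the substitution error feeds back as a genuinely negligible perturbation rather than an $O(1)$ bias --- is the delicate part, and is exactly where Assumptions~\ref{assumption:stepsizes}--\ref{assumption:weight matrix} are used in concert.
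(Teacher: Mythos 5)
Your proposal is correct and follows essentially the same route as the paper's proof: the same agreement/disagreement decomposition of the stacked iterate, the same use of row-stochasticity plus the spectral-gap condition to drive the disagreement to zero, and the same ODE/stochastic-approximation argument showing the averaged iterate tracks $\dot\omega = \Phi^\top D_\pi(\gamma P^\pi - I)\Phi\,\omega + \Phi^\top D_\pi R$, whose Hurwitz drift (via full column rank of $\Phi$) forces convergence to the MSPBE minimizer. The coupling issue you flag at the end is precisely what the paper handles by showing $\sup_t \E\bigl[\norm{\beta_{\omega,t}^{-1}\omega_{\bot,t}}^2\bigr]<\infty$ on bounded trajectories, i.e., the disagreement vanishes faster than the stepsize, so your plan is sound as stated.
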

\begin{proof}
We use the same proof techniques as \cite{zhang2018fully}.

Let 
$\phi_t = \phi(s_t,a_t)$,
$\delta_t = [\delta^1_t,...,\delta^N_t]^\top$, and 
$\omega_t = [\omega^1_t,...,\omega^N_t]^\top$
.
The update of $\omega_t$ in Equation \ref{eq:critic_update} can be rewritten in a compact form of
$
    \omega_{t+1} = (C_{\omega,t} \otimes I)(\omega_t + \beta_{\omega,t}y_t)
$
where
$\otimes$ is the Kronecker product,
$I$ is the $K \times K$ identity matrix, and
$y_t = [\delta^1_t \phi_t^\top,...,\delta^N_t \phi_t^\top]^\top \in \mathbb{R}^{KN}$.
Define operator $\langle \cdot \rangle : \mathbb{R}^{KN} \to \mathbb{R}^{K}$ as 
\begin{align*}
    \langle \omega \rangle = \frac{1}{N}(\mathbbm{1}^\top \otimes I)\omega = \frac{1}{N} \sum_{i\in\mathcal{N}} \omega^i
\end{align*}
for any $\omega = [(\omega^1)^\top,...,(\omega^N)^\top]^\top \in \mathbb{R}^{KN}$ with $\omega^i \in \mathbb{R}^{K}$ for any $i\in\mathcal{N}$.
We decompose $\omega_t$ into its {\em agreement} component $\mathbbm{1} \otimes \langle \omega_t \rangle $ and its {\em disagreement} component $\omega_{\bot,t}:= \omega_t - \mathbbm{1} \otimes \langle \omega_t \rangle$.
To prove $\omega_t = \omega_{\bot,t} + \mathbbm{1} \otimes \langle \omega_t \rangle \xrightarrow{a.s.} \mathbbm{1} \otimes \omega_\pi$ , we next show
$\omega_{\bot,t} \xrightarrow{a.s.} 0$
and 
$\langle \omega_t \rangle \xrightarrow{a.s.} \omega_\pi$
respectively.

\paragraph{Convergence of $\omega_{\bot,t} \xrightarrow{a.s.} 0$.}
We first establish that, for any $M>0$, we have 
\begin{align}
\label{eq:critic convergence:step 1}
\sup_t 
\E\left[
\norm{\beta_{\omega, t}^{-1}\omega_{\bot,t}}^2
\cdot
\mathbf{1}_{\{\sup_t \norm {\omega_t} \leq M\}}
\right]
< \infty.
\end{align}
To show Equation \ref{eq:critic convergence:step 1}, let
$\{\mathcal{F}_t\}$ be the filtration of $\mathcal{F}_t=\sigma(r_{\tau-1} s_\tau, a_\tau, \omega_\tau, C_{\omega,\tau-1}; \tau \leq t)$,
$J=\frac{1}{N}(\mathbbm{1} \mathbbm{1}^\top \otimes I )$ such that 
$
    J \omega_t =  \mathbbm{1}  \otimes \langle \omega_t \rangle, 
    (I-J) \omega_t = \omega_{\bot, t}
$.
The following facts about $\otimes$ will be useful:
\begin{align}
    (A \otimes B)(C \otimes D) = (AC)\otimes(BD) \label{eq:critic convergence:Kronecker produce property}
\end{align}
This enables us to write $\omega_{\bot, t+1}$ as 
\begin{align*}
    \omega_{\bot, t+1} =& (I-J) \omega_{t+1} \\
    =& (I-J) \left[(C_{\omega,t} \otimes I)(\omega_t + \beta_{\omega,t} y_t)\right]\\
    =& (I-J) \left[(C_{\omega,t} \otimes I)(\mathbbm{1}  \otimes \langle\omega_t\rangle + \omega_{\bot, t}  + \beta_{\omega,t} y_t)\right]\\
    (&\text{By  Equation \ref{eq:critic convergence:Kronecker produce property} and Assumption \ref{assumption:weight matrix}, we have }
    (C_{\omega,t} \otimes I)(\mathbbm{1}  \otimes \langle\omega_t\rangle) = (C_{\omega,t} \mathbbm{1} ) \otimes (I \langle\omega_t\rangle) = \mathbbm{1}  \otimes \langle\omega_t\rangle
    )\\
    =& (I-J) \left[\mathbbm{1}  \otimes \langle\omega_t\rangle +   (C_{\omega,t} \otimes I)( \omega_{\bot, t} + \beta_{\omega,t} y_t)\right]\\
    =&(I-J) \left[(C_{\omega,t} \otimes I)( \omega_{\bot, t}  + \beta_{\omega,t} y_t)\right]
    \qquad ((I-J) (\mathbbm{1}  \otimes \langle\omega_t\rangle) = 0)
    \\
    =&[(I-\mathbbm{1} \mathbbm{1}^\top / N)\otimes I] \left[(C_{\omega,t} \otimes I)( \omega_{\bot, t}  + \beta_{\omega,t} y_t)\right]
    \qquad (I-J = (I-\mathbbm{1} \mathbbm{1}^\top / N)\otimes I)& \\
    =& [(I-\mathbbm{1} \mathbbm{1}^\top / N)C_{\omega,t}\otimes I] ( \omega_{\bot, t}  + \beta_{\omega,t} y_t)
    \qquad (\text{By Equation \ref{eq:critic convergence:Kronecker produce property}}).
\end{align*}
We then have
\begin{align}
    &\E\left[ \norm{\beta_{\omega, t+1}^{-1}\omega_{\bot,t+1}}^2 ~|~ \mathcal{F}_t \right] \nonumber\\
    &(\norm{x}^2 = x^\top x, A = (I-\mathbbm{1} \mathbbm{1}^\top / N)C_{\omega,t}\otimes I, A^\top A = C_{\omega,t}^\top(I-\mathbbm{1} \mathbbm{1}^\top / N)C_{\omega,t}\otimes I)\nonumber\\
    =& \frac{\beta_{\omega,t}^2}{\beta_{\omega,t+1}^2}\E\left[ 
    \left(\beta_{\omega,t}^{-1}\omega_{\bot, t}+y_t\right)^\top 
    \left(C_{\omega,t}^\top(I-\mathbbm{1} \mathbbm{1}^\top / N)C_{\omega,t} \otimes I\right)
    \left(\beta_{\omega,t}^{-1}\omega_{\bot, t}+y_t\right)
    ~|~ \mathcal{F}_t \right]\nonumber\\
    &(A = C_{\omega,t}^\top(I-\mathbbm{1} \mathbbm{1}^\top / N)C_{\omega,t} , B = I, \norm{A \otimes B} = \norm{A} \norm{B})\nonumber\\
    &(x^\top A x = \norm {x^\top A x} \leq \norm {x^\top} \norm {A} \norm {x}  = \norm {A} x^\top x)\nonumber\\
    &( \text{$C_{\omega,t}$ and $(r_t, y_t)$ are independent conditioning on $\mathcal{F}_t$} )\nonumber\\
    \leq& \frac{\beta_{\omega,t}^2}{\beta_{\omega,t+1}^2}\rho \E\left[ 
    \left(\beta_{\omega,t}^{-1}\omega_{\bot, t}+y_t\right)^\top
    \left(\beta_{\omega,t}^{-1}\omega_{\bot, t}+y_t\right)
    ~|~ \mathcal{F}_t \right] \nonumber\\
    &(\text{where $\rho$ is the spectral norm of $\E[C_{\omega,t}^\top(I-\mathbbm{1} \mathbbm{1}^\top / N)C_{\omega,t}]$})\nonumber\\
    =& \frac{\beta_{\omega,t}^2}{\beta_{\omega,t+1}^2}\rho
    \left(
    \E\left[ \norm{\beta_{\omega,t}^{-1}\omega_{\bot, t}}^2 ~|~ \mathcal{F}_t \right]
    +2 \E\left[ \langle \beta_{\omega,t}^{-1}\omega_{\bot, t}, y_t \rangle ~|~ \mathcal{F}_t \right]
    +\E\left[ \norm{y_t}^2 ~|~ \mathcal{F}_t \right]
    \right) \nonumber\\
    &(\text{By Cauchy-Schwarz $|\langle u, v \rangle|\leq \norm{u}\norm{v}$}) \nonumber\\
    \leq& \frac{\beta_{\omega,t}^2}{\beta_{\omega,t+1}^2}\rho
    \left(
    \E\left[ \norm{\beta_{\omega,t}^{-1}\omega_{\bot, t}}^2 ~|~ \mathcal{F}_t \right]
    +2
    \E\left[ \norm{\beta_{\omega,t}^{-1}\omega_{\bot, t}} \norm{y_t} ~|~ \mathcal{F}_t \right]
    +\E\left[ \norm{y_t}^2 ~|~ \mathcal{F}_t \right]
    \right)\nonumber\\
    &(\text{Quantities are deterministic given $\mathcal{F}_t$}) \nonumber\\
    =& \frac{\beta_{\omega,t}^2}{\beta_{\omega,t+1}^2}\rho
    \left(
    \norm{\beta_{\omega,t}^{-1}\omega_{\bot, t}}^2
    +2 \norm{\beta_{\omega,t}^{-1}\omega_{\bot, t}} \norm{y_t} 
    + \norm{y_t}^2 
    \right) \label{eq:bound_eta}.
\end{align}
Since $\E [\norm{y_t}^2 | \mathcal{F}_t] = \E [\sum_{i \in \mathcal{N}}\norm{\delta^i_t \phi_t}^2 | \mathcal{F}_t]$ with $\delta^i_t  = r_t + \gamma \phi_t^\top \omega^i_t - \phi_{t+1}^\top \omega^i_t$, 
by Assumptions \ref{assumption: Markov game} and \ref{assumption:features} the rewards $r_t$ and the features $\phi_t$ are bounded, and thus we have that $\E [\norm{y_t}^2 | \mathcal{F}_t]$ is bounded on set $\{\sup_{\tau \leq t } \norm{\omega_\tau}\leq M\}$ for any given $M>0$.
We can then following the proof of Lemma 5.3 in \cite{zhang2018fully} and its sequel to show Equation \ref{eq:critic convergence:step 1} and conclude the step of $\omega_{\bot,t} \xrightarrow{a.s.} 0$.

\paragraph{Convergence of $\langle \omega_t \rangle \xrightarrow{a.s.} \omega_\pi$.}
We write the update of $\langle\omega_t\rangle$ as 
\begin{align*}
    \langle \omega_{t+1} \rangle 
    =& \frac{1}{N}(\mathbbm{1}^\top \otimes I)\omega_{t+1}\\
    =& \frac{1}{N}(\mathbbm{1}^\top \otimes I)\left[(C_{\omega,t} \otimes I)(\mathbbm{1}  \otimes \langle\omega_t\rangle + \omega_{\bot, t}  + \beta_{\omega,t} y_t)\right]\\
    =& \langle\omega_t\rangle + \beta_{\omega,t} \langle(C_{\omega,t}\otimes I)(y_t+  \beta_{\omega,t}^{-1}\omega_{\bot, t})\rangle
    \qquad (\text{By Equation \ref{eq:critic convergence:Kronecker produce property}}).
\end{align*}
We rewrite the above update as
\begin{align}
    \langle \omega_{t+1} \rangle 
    =& \langle\omega_t\rangle + 
    \beta_{\omega,t}\E\left[\langle\delta_t\rangle\phi_t ~|~ \mathcal{F}_t\right] + 
    \beta_{\omega,t} \xi_t \label{eq:agreement_update}\\
    \text{where} \qquad \xi_t =& \langle(C_{\omega,t}\otimes I)(y_t+  \beta_{\omega,t}^{-1}\omega_{\bot, t})\rangle - \E\left[\langle\delta_t\rangle\phi_t ~|~ \mathcal{F}_t\right] \nonumber
\end{align}
We can verify that the following conditions hold (with probability 1) regarding the update of $\langle\omega_t\rangle$ in Equation \ref{eq:agreement_update}:
\begin{enumerate}
    \item $\E\left[\langle\delta_t\rangle\phi_t ~|~ \mathcal{F}_t\right]$ is Lipschitz continuous in $\langle\omega_t\rangle$,
    \item $\xi_t$ is a martingale difference sequence and satisfies $\E[\norm{\xi_{t+1}}^2 ~|~ \mathcal{F}_t] \leq K(1+\norm{\omega_t}^2)$ for some constant $K$,
\end{enumerate}
such that the conditions in Assumption B.1 of \cite{zhang2018fully} are satisfied (with probability 1) and the behavior of Equation \ref{eq:agreement_update} is related to its corresponding ODE (see Theorem B.2 in \cite{zhang2018fully}):
\begin{align*}
    \dot{\langle\omega\rangle} =& \sum_{s,a}d_\pi(s,a)\E\left[\langle\delta\rangle\phi|s,a\right] \\
    =& \sum_{s,a}d_\pi(s,a)\E_{s',a'}\left[\left(r(s,a) + \gamma\phi^\top(s,a)\langle\omega\rangle - \phi^\top(s,a)\langle\omega\rangle\right)\phi(s,a)|s,a\right]\\
    =& \Phi^\top D_\pi (\gamma P^\pi - I)\Phi \langle\omega\rangle  + \Phi^\top D_\pi R
\end{align*}
Note that $(\gamma P^\pi - I)$ has all eigenvalues with negative real parts, so does $(\Phi^\top D_\pi (\gamma P^\pi - I)\Phi)$ since $\Phi$ is assumed to be full column rank.
Hence, the ODE is globally asymptotically stable, with its equilibrium satisfying
\begin{align*}
    \Phi^\top D_\pi\left[ R + (\gamma P^\pi - I)\Phi\langle\omega\rangle \right] = 0,
\end{align*}
which is the MSPBE minimizer, i.e., $\langle\omega\rangle = \omega_\pi$.
This concludes the step of $\langle \omega_t \rangle \xrightarrow{a.s.} \omega_\pi$ and the proof of Lemma \ref{lemma:critic}.
\end{proof}

\subsection{Actor convergence}
In this subsection, we establish the convergence of actor update with critic parameters $\omega^i_t$ in Equation \ref{eq:actor_update} replaced with the critic convergence point established in Lemma \ref{lemma:critic}.
Then, by the two-timescale nature of the algorithm, we establish the convergence of $\{\omega^i_t\}$ and $\{\theta^i_t\}$ generated by Equation \ref{eq:critic_update} and Equation \ref{eq:actor_update}.

Let $\theta = [(\theta^1)^\top,...,(\theta^N)^\top]^\top$ and $\omega_{\theta}$ be the critic convergence point for joint policy parameterized by $\theta$ as established in Lemma \ref{lemma:critic}.
Define 
\begin{align*}
    A^i_{t, \theta} =  Q(s_t,a_t; \omega_{\theta})
    \quad
    \psi^i_{t,\theta} = \nabla_{\theta^i} \log \pi^i(a^i_t|o^i(s_t);\theta^i)
\end{align*}
for an arbitrary $\theta$.
We study the variant of Equation \ref{eq:actor_update} where $\omega^i_t$ is replaced by $\omega_{\theta_t}$:
\begin{align} \label{eq:actor_update_under_critic_convergence}
    A^i_{t, \theta_t} = &Q(s_t,a_t; \omega_{\theta_t})
    \quad
    \psi^i_{t, \theta_t} = \nabla_{\theta^i} \log \pi^i(a^i_t|o^i(s_t);\theta^i_t) \nonumber\\
    \tilde{\theta}^i_{t+1} =& \theta^i_t + \beta_{\theta,t}\cdot A^i_{t, \theta_t} \cdot \psi^i_{t, \theta_t} \nonumber\\
    \theta^i_{t+1} =& \sum_{j\in\mathcal{N}}c_{\theta,t}(i,j)\cdot  \tilde{\theta}^i_t
\end{align}
which can be rewritten as 
\begin{align*}
    \theta_{t+1} = (C_{\theta,t} \otimes I)(\theta_t + \beta_{\theta,t}y_{t,\theta_t})
\end{align*}
where $y_{t,\theta_t} = [(A^1_{t, \theta_t} \cdot \psi^1_{t, \theta_t})^\top,...,(A^N_{t, \theta_t} \cdot \psi^N_{t, \theta_t})^\top]^\top$.

Similar to the critic convergence, we make the decomposition $\theta_t = \theta_{\bot,t} + \mathbbm{1} \otimes \langle \theta_t \rangle$ and then show
$\theta_{\bot,t} \xrightarrow{a.s.} 0$
and convergence of
$\langle \theta_t \rangle$ respectively.

\paragraph{Convergence of $\theta_{\bot,t} \xrightarrow{a.s.} 0$.}
In light of the argument for  $\omega_{\bot,t} \xrightarrow{a.s.}0$ in the proof of Lemma \ref{lemma:critic},
it suffices to show that the the boundedness of $y_{t,\theta_t}$.
Here, $y_{t,\theta_t} = [(A^1_{t, \theta_t} \cdot \psi^1_{t, \theta_t})^\top,...,(A^N_{t, \theta_t} \cdot \psi^N_{t, \theta_t})^\top]^\top$ is bounded because
1) $A^i_{t, \theta_t} = Q(s_t,a_t; \omega_{\theta_t})$ is bounded since $\omega_{\theta_t}$ is the MSPBE minimizer;
(2) $\psi^i_{t, \theta_t}$ is bounded since by Assumption \ref{assumption:stability} it is a continuous function over a compact set.

\paragraph{Convergence of $\langle \theta_t \rangle$.}
We write the update of $\langle \theta_t \rangle$ in Equation \ref{eq:actor_update_under_critic_convergence} as
\begin{align*}
    \langle \theta_{t+1} \rangle 
    =& \frac{1}{N}(\mathbbm{1}^\top \otimes I)\theta_{t+1}\\
    =& \frac{1}{N}(\mathbbm{1}^\top \otimes I)\left[(C_{\theta,t} \otimes I)(\mathbbm{1}  \otimes \langle\theta_t\rangle + \theta_{\bot, t}  + \beta_{\theta,t} y_{t,\theta_t})\right]\\
    =& \langle\theta_t\rangle + \beta_{\theta,t} \langle(C_{\theta,t}\otimes I)(y_{t,\theta_t}+  \beta_{\theta,t}^{-1}\theta_{\bot, t})\rangle
    \qquad (\text{By Equation \ref{eq:critic convergence:Kronecker produce property}}).
\end{align*}
We rewrite the above update as
\begin{align*}
    \langle \theta_{t+1} \rangle 
    =& \langle\theta_t\rangle + 
    \beta_{\theta,t}\E_{s_t\sim d_{\langle\theta_t\rangle},a_t\sim\pi_{\langle\theta_t\rangle}}\left[\langle y_{t,\theta_t} \rangle ~|~ \mathcal{F}_t\right] + 
    \beta_{\theta,t} \xi_t \\
    \text{where} \qquad \xi_t =& \langle(C_{\theta,t}\otimes I)(y_{t,\theta_t}+  \beta_{\theta,t}^{-1}\theta_{\bot, t})\rangle - \E_{s_t\sim d_{\langle\theta_t\rangle},a_t\sim\pi_{\langle\theta_t\rangle}}\left[\langle y_{t,\theta_t} \rangle ~|~ \mathcal{F}_t\right] 
\end{align*}
where $\mathcal{F}_t = \sigma(\theta_\tau, \tau\leq t)$, $\pi_{\langle\theta_t\rangle}$ is the {\em joint} policy where each individual policy is parameterized by $\langle\theta_t\rangle$.
Note that $\xi_t$ is a martingale difference sequence.
By Assumption \ref{assumption:stability} $\xi_t$ is bounded and further by Assumption \ref{assumption:stepsizes} we have $\sum_{t}\norm{\beta_{\theta,t}\xi_t}^2 < \infty$.
By arguments in the proof of Theorem 4.7 in \cite{zhang2018fully}, we can apply Kushner-Clark lemma and conclude that $\langle \theta_t \rangle$ converges almost sure to a point in the set of asymptotically stable equilibria of 
\begin{align*}
    \dot{\langle\theta\rangle} = \E_{s_t\sim d_{\langle\theta\rangle},a_t\sim\pi_{\langle\theta\rangle}}\left[\langle y_{t,\langle\theta\rangle} \rangle \right]  = 
    \E_{s_t\sim d_{\langle\theta\rangle},a_t\sim\pi_{\langle\theta\rangle}}\left[\sum_i A^i_{t, \langle\theta\rangle} \cdot \psi^i_{t,\langle\theta\rangle} \right].
\end{align*}

\newpage\clearpage
\section{Visualization of the learned communication rule}
\label{sec:Visualization of the learned communication rule}
\begin{figure}[ht]
\centering
\begin{subfigure}[b]{1\textwidth}
  \centering
  \includegraphics[width=1\linewidth]{gate_visualization/spread_n15/spread_n15_10.pdf}
\end{subfigure}
\begin{subfigure}[b]{1\textwidth}
  \centering
  \includegraphics[width=\linewidth]{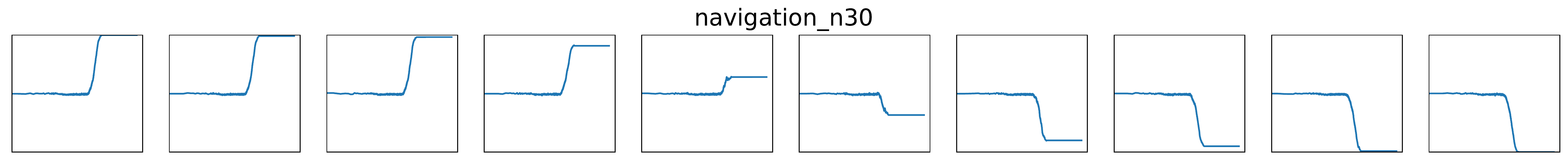}
\end{subfigure}
\begin{subfigure}[b]{1\textwidth}
  \centering
  \includegraphics[width=\linewidth]{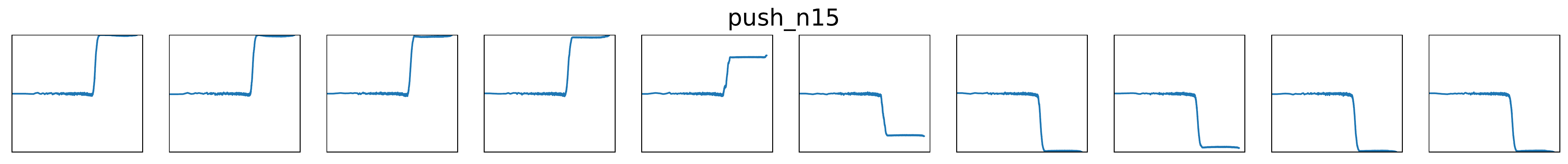}
\end{subfigure}
\begin{subfigure}[b]{1\textwidth}
  \centering
\includegraphics[width=\linewidth]{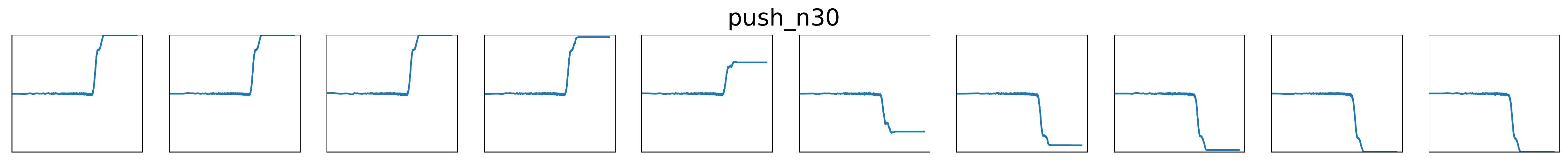}
\end{subfigure}
\begin{subfigure}[b]{1\textwidth}
  \centering
  \includegraphics[width=\linewidth]{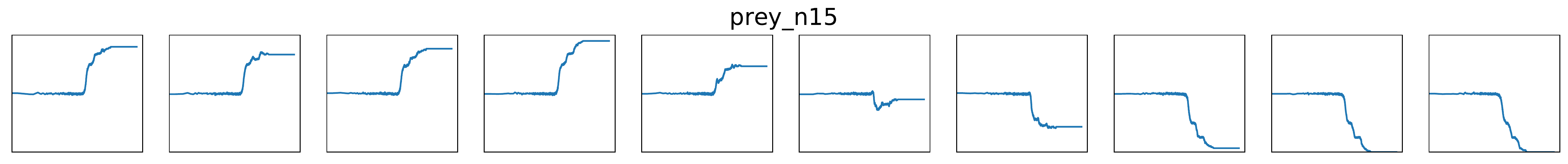}
\end{subfigure}
\begin{subfigure}[b]{1\textwidth}
  \centering
  \includegraphics[width=\linewidth]{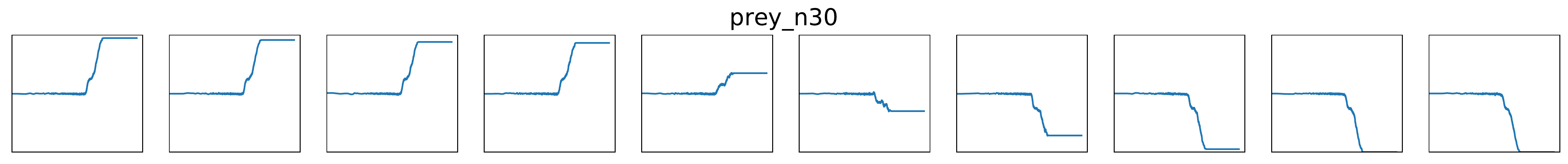}
\end{subfigure}
\caption{
Y-axis: average communication rate.
X-axis: training step in log scale. The average communication rate for detectable 10 nearby agents, with the order increasing in distance from left to right.
}
\end{figure}

\newpage\clearpage
\section{Examples and Non-Examples of Homogeneous MG}
\label{sec:Examples and Non-Examples of Homogeneous MGs}
\subsection{Examples}

\paragraph{MPE tasks with homogeneous agents.}
We have explained in Section \ref{sec: Homogeneous MG: Definition,  Properties, and Examples} that the Cooperative Navigation task in MPE is an example of homogeneous MG.
By repeating the same arguments, we can show that all other MPE tasks with homogeneous agents are example of homogeneous MG, including cooperative push and predator-and-prey as we have used for our experiments in Section \ref{sec:Experiments}.
Specifically, each agent's observation contains its absolution location and velocity, as well as the relative location and/or velocity of other agents and environment objects (e.g., landmarks, the ball and the target position in cooperative push, preys). 
For predator-and-prey, either the predators or the preys form a team of homogeneous agents.

\paragraph{SMAC scenarios with homogeneous ally units.}
StarCraft Multi-Agent Challenge (SMAC) is another benchmark environment for cooperative MARL.
In a number of SMAC scenarios, the team of agents consists of ally units of a single unit type (e.g., Marines), and they are tasked to defeat an enemy team controlled by the environment, with examples including 3m, 8m, 25m, 8m\_vs\_9m, 2m\_vs\_1z, 6h\_vs\_8z, etc.
Each ally unit (i.e., agent) observes the following attributes of both ally and enemy units: \textit{distance, relative x, relative y, health, shield, and unit\_type}.
Thus, an SMAC scenario with homogeneous ally units is similar to MPE's predator-and-prey in the sense how it satisfies the conditions in Definition \ref{definition:Homogeneous Markov game}.  
Thus, SMAC scenarios with homogeneous ally units are homogeneous MGs.

\paragraph{Team sports.}
Sports with homogeneous players forming a team are homogeneous MGs,
with examples including basketball, American football, soccer(associate football)/ice hockey excluding the goalkeeper.
In these team sports,  players' local views naturally forms their observations that satisfy Definition \ref{definition:Homogeneous Markov game}\ref{item:permutation preserving}.

\paragraph{Traffic with homogeneous vehicles.}
Traffic consisting vehicles of the same type (e.g., the same car-following model) is an example of homogeneous MG.
Like team sports, vehicles' local views naturally forms their observations that satisfy Definition \ref{definition:Homogeneous Markov game}\ref{item:permutation preserving}.
Unlike team sports, these vehicles are unnecessarily cooperative, but their reward functions are permutation invariant to satisfy Definition \ref{definition:Homogeneous Markov game}\ref{item:permutation invariant}.

\paragraph{Surveillance with drones.}
Drone surveillance is an application of cooperative MARL, where a set of (homogeneous) drones is tasked to collectively monitor a ground area.
Since the drones' objective is to cover the ground area, the task is analogous to MPE's Cooperative Navigation to satisfy the conditions in Definition \ref{definition:Homogeneous Markov game}.


\subsection{Non-Examples}
\paragraph{MPE tasks with heterogeneous agents.}
As the counterpart of Cooperative Navigation with homogenous agents, \cite{liu2020pic} introduce \textit{heterogeneous navigation} where half of the agents are small and fast and the other half are big and slow.
In such an MPE task, the transition function is not permutation invariant, and therefore it is not a homogeneous MG.

\paragraph{SMAC scenarios with heterogeneous ally units.}
If the ally units in an SMAC scenario are of different types, then the transition function is not permutation invariant, and therefore it is not a homogeneous MG.
These SMAC scenarios include 2s3z, 3s5z, MMM2, etc.

\paragraph{Multi-Agent MuJoCo}
In a MuJoCo task, a robot aims to learn an optimal way of motion.
Multi-Agent MuJoCo \citep{peng2020facmac} controls each part of the robot with an agent, for example, a
leg for a spider.
Since the parts of a robot are heterogeneous, Multi-Agent MuJoCo can violate condition \ref{item:homogeneous state-action spaces} of Definition \ref{definition:Homogeneous Markov game}.

\newpage\clearpage
\section{Implementation Details}
\label{sec:Implementation Details}
\subsection{Architecture Overview}
\label{sec:Architecture Overview}
\begin{figure}[h]
\centering
\includegraphics[width=.9\textwidth]{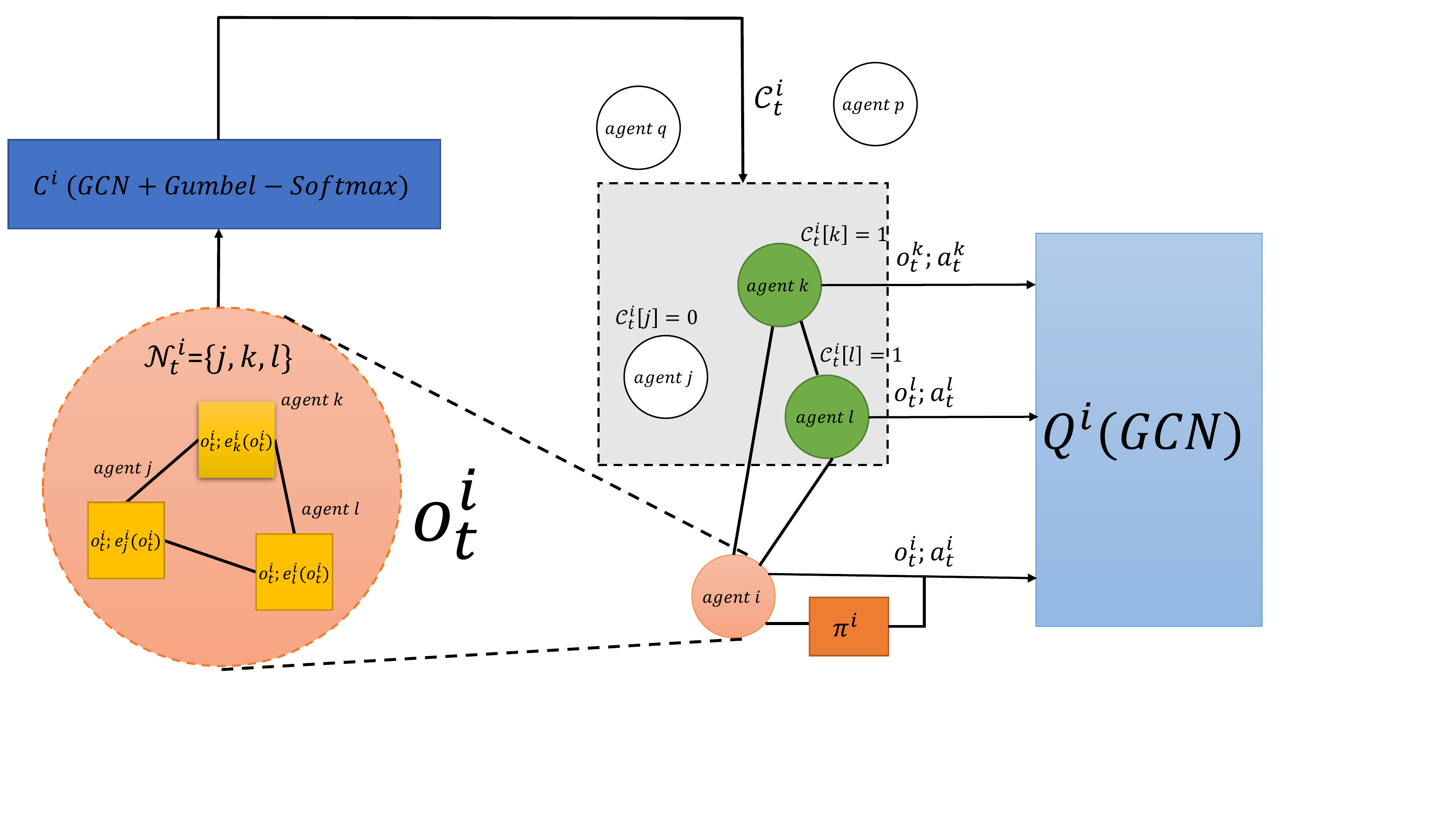}
\caption{As an illustration, suppose there are 6 agents   ($i,j,k,l,p,q$). At time step $t$, agent $i$ receives the observation $o^i_t$ which contains information about three neighboring agents: $j,k$ and $l$.  Then, agent $i$ uses its communication network $C^i$ (GCN) to determine which observable agents worth communicating (agent $k$ and $l$ in this case) by the complete graph of agent $j,k$ and $l$. Agent $i$ and the selected agent $k$ and $l$ forms a complete graph which is fed into the critic $Q^i$ (GCN).}
\label{Model}
\end{figure}

\newpage\clearpage

\newpage\clearpage
\subsection{Pseudocode}
\label{sec:Pseudocode}
\begin{algorithm}
	\caption{Pseudocode of our communication-efficient actor-critic algorithm} 
	\begin{algorithmic}[1]
	\State init $\{C^i\}_{i=1}^N,\{Q^i\}_{i=1}^N,\{\pi^i\}_{i=1}^N,\text{memory},\{\text{records}_{l}^i\}_{i=1}^N,\{\text{records}_{h}^i\}_{i=1}^N.$ 
		\For {episode $m=1,2,\ldots,L$}
		\For {agent $i=1,2,\ldots,N$}
				\State sample $x_m^1$ from agent $i$'s high-level bandit.
				\If {$x_m^1==0$}
		\State sample $x_m^2$ from agent $i$'s low-level bandit.
		\State Agent $i$ does parameter consensus with agent $x_m^2$.
		\EndIf	
		\EndFor
		\State $G_m=0$
			\For {$t=1,2,\ldots,T$}
				\State Get $o_{t+1}$, $r_{t}$ by interacting with the environment; Put  $o_{t}, a_{t}, o_{t+1}, r_{t} $ into the memory.
				\State $G_m\leftarrow r_{t}+\gamma G_m$
				
		\EndFor
		\If {$\texttt{len(memory)} > \texttt{batch\_size}$}
		\For {$i=1,2,\ldots,N$}
				\State sample a batch from memory.
				\State $l_{\text{critic}}^i=0$ \Comment{Loss for agent $i$'s critic and comm net.}
				\For {$o^i_{t}, a^i_{t}, o^i_{t+1}, r^i_{t}$ in batch[i]}
				\State $\mathcal{C}^i_t\leftarrow C^i(o^i_{t}),\mathcal{C}^i_{t+1}\leftarrow C^i(o^i_{t+1})$
				\State $Q^i_t\leftarrow Q^i(\{(o^k_t,a^k_t)\}_{k\in\{i\} \cup \mathcal{C}^i_t});Q^i_{t+1}\leftarrow Q^i(\{(o^k_{t+1},a^k_{t+1})\}_{k\in\{i\} \cup \mathcal{C}^i_{t+1}})$
				\State $y^i_t := r^i_t + \gamma Q^i_{t+1}$
				
				\State $l_{\text{critic}}^i\leftarrow l_{\text{critic}}^i +{(Q^i_t-y^i_t)^2 +\alpha|\frac{1}{|\mathcal{C}^i_t|}\sum_{j\in\mathcal{C}^i_t} \text{Softmax}(l^i_j)-\eta|}$
		\EndFor
		\State $l_{\text{critic}}^i\leftarrow l_{\text{critic}}^i/$batch\_size
		\State Send $l_{\text{critic}}^i$ to the Adam optimizer for the update.
		\EndFor
		
		\For {$i=1,2,\ldots,N$}
				\State sample a batch from memory.
				\State $l_{\text{actor}}^i=0$ \Comment{Loss for agent $i$'s actor.}
				\For {$o^i_{t}$ in batch[i]}
				\State $a^i_t\leftarrow \pi^i(o^i_{t})$
				\State $\mathcal{C}^i_t\leftarrow C^i(o^i_{t}),\mathcal{C}^i_{t+1}\leftarrow C^i(o^i_{t+1})$
				\State $Q^i_t\leftarrow Q^i(\{(o^k_t,a^k_t)\}_{k\in\{i\} \cup \mathcal{C}^i_t}) $
				
				\State $l_{\text{actor}}^i\leftarrow l_{\text{actor}}^i -Q^i_t$
		\EndFor
		\State $l_{\text{actor}}^i\leftarrow l_{\text{actor}}^i/$batch\_size
		\State Send $l_{\text{actor}}^i$ to the Adam optimizer for the update.
		\EndFor
		
		\For {agent $i=1,2,\ldots,N$}
				\State Push $G_m$ into $\text{records}_h^i$.
				\State Keep the latest l elements of $\text{records}_h^i$; Compute $r_m^1$ designed in (\ref{eq:bandit_reward})
				\State Update agent $i$'s high-level bandit by $r_m^1$.
				\If {$x_m^1==0$}
		\State Push $G_m$ into  $\text{records}_l^i$.
				\State Keep the latest l elements of $\text{records}_l^i$; Compute $r_m^2$ designed in (\ref{eq:bandit_reward})
				\State Update agent $i$'s low-level bandit by the $r_m^2$.
		\EndIf	
		\EndFor
		\EndIf	
		\EndFor
	\end{algorithmic} 
\end{algorithm}

\newpage\clearpage
\subsection{Hyperparameters}
\label{sec:Hyperparameters}

\begin{table}[h]
\centering
\caption{Hyperparameters}
\begin{tabular}{ll} 
\toprule
Hyperparameter                       & Value                                                                           \\ 
\hline
Episode length                       & 25                                                                              \\
Number of training episodes          & 40000                                                                           \\
Discount factor                      & 0.95                                                                            \\
Communication network architecture          & Concat[$o^i,e_j^i(\text{obs dim)}$]-GCN\_Layer1(128)
\\&-GCN\_Layer2(128)-FC(2)-Gumbel-Softmax
\\
Communication network optimizer             & Adam with learning rate 0.001  \\
Critic network architecture          & $[o^j;a^j]_{\{j\in \{i\} \cup \mathcal{C}^i_t\}}$-GCN\_layer1-FC(128)-\\&-GCN\_layer2-FC(128)-Max\_pool-FC(1)                                                                                        \\
Critic network optimizer             & Adam with learning rate 0.01                                                   \\
Policy network architecture          & $o^i$-FC(128)-FC(128)-Linear(action\_dim)  \\
Policy network optimizer             & Adam with learning rate 0.01                                                   \\
Gumbel-Softmax temperature & 1                                                                           \\
Batch size from replay buffer & 256                                                                             \\
Frequency of evaluation & per 1000 episodes                                                                           \\
\#Latest episodic rewards bandit store  & 10                                                                       \\
Regularization for the communication network ($\alpha$) & searched in [50,100,200,300...1000,2000]                                                            \\
\bottomrule
\end{tabular}
\label{Hyperparameters}
\end{table}

\newpage\clearpage
\section{Experiments on A Toy Example of Homogeneous MG}
\label{sec:Experiments on A Toy Example of Homogeneous MG}
\begin{figure}[ht]
\centering
\begin{subfigure}[b]{0.49\textwidth}
  \centering
  \includegraphics[width=\linewidth]{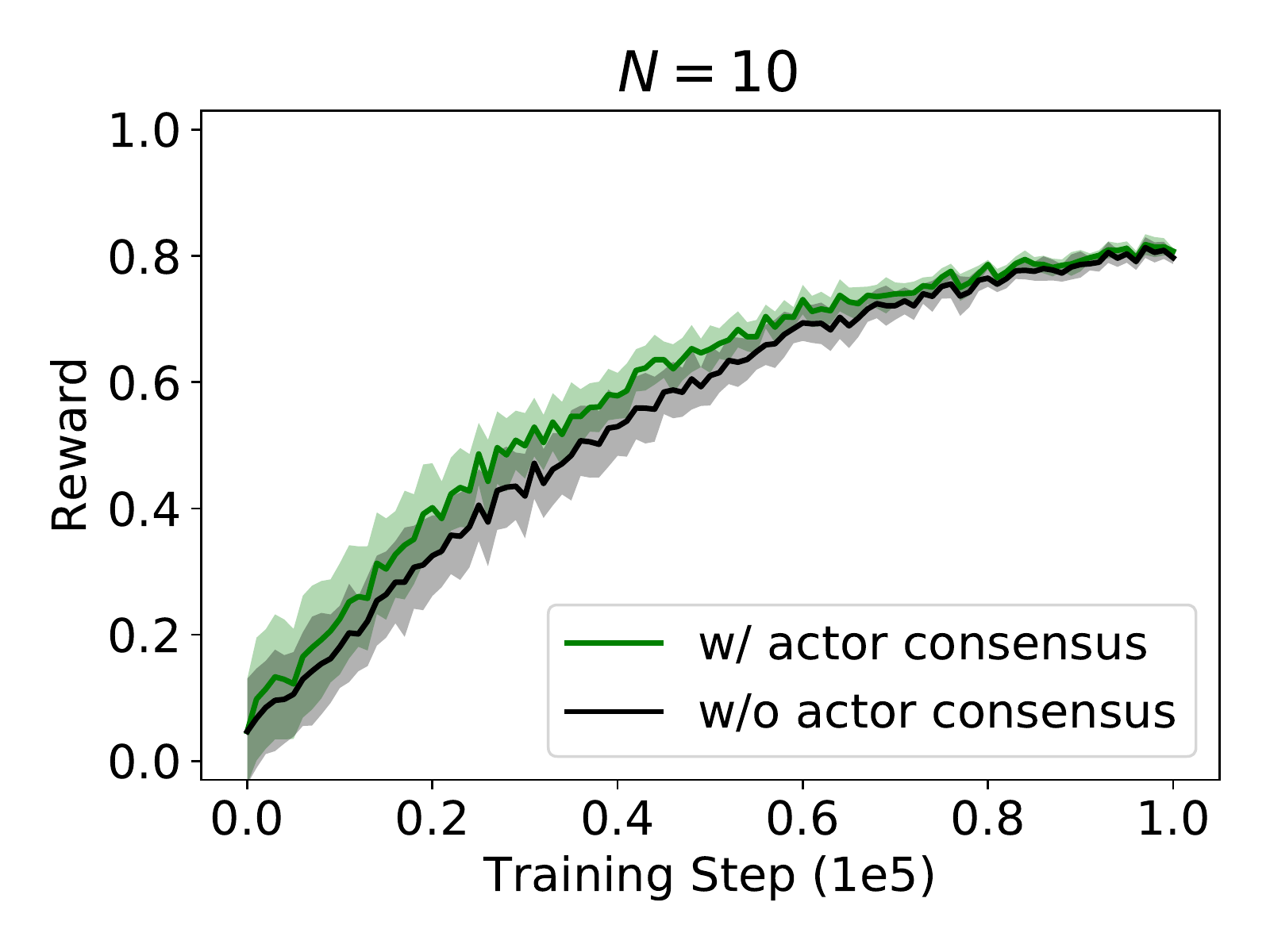}
\end{subfigure}
\begin{subfigure}[b]{0.49\textwidth}
  \centering
  \includegraphics[width=\linewidth]{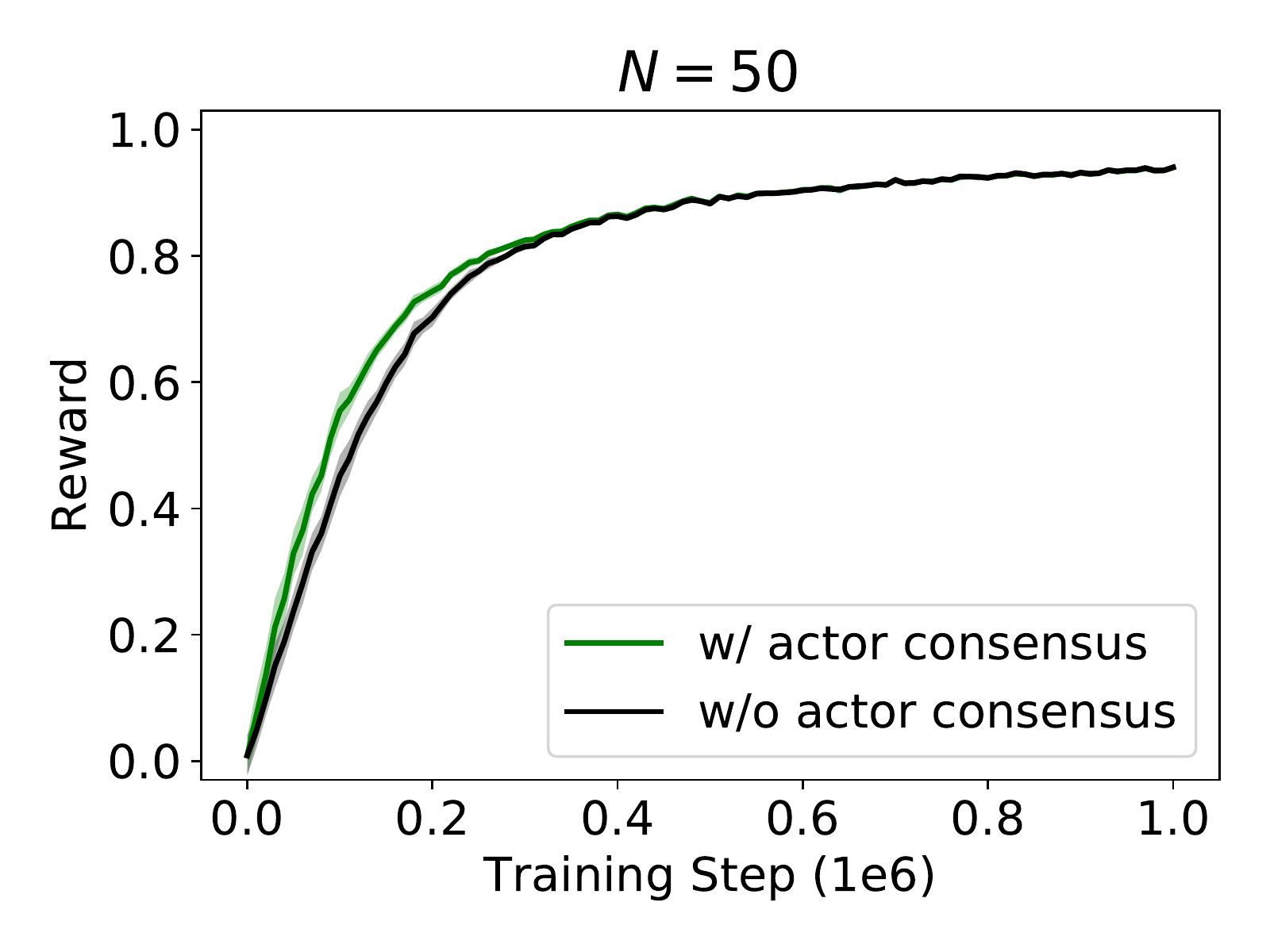}
\end{subfigure}
\caption{Learning curves on the toy example.
}
\label{fig:toy example}
\end{figure}
Theorem \ref{theorem:distributed actor-critic} proves the asymptotic convergence of our decentralized actor-critic updates in Equations (\ref{eq:critic_update})(\ref{eq:actor_update}) with actor consensus for homogeneous MGs, 
which generalizes the asymptotic convergence result without actor consensus \citep{zhang2018fully}.
While the actor-critic updates converge asymptotically both with and without actor consensus, obtaining their convergence rates require non-trivial finite-time analysis that remains an open problem.
Here, we empirically compare the actor-critic updates in Equations (\ref{eq:critic_update})(\ref{eq:actor_update}) with and without actor consensus on a toy example of homogeneous MG, leaving the finite-time analysis for future work.

\paragraph{The toy example.}
We have provided the stateless MG in \cite{kuba2021trust} in Section \ref{sec: Homogeneous MG: Definition,  Properties, and Examples} as a non-example.
If we augment each agent $i$ with a unique local state $s^i$, then it is easy to verify that these local states satisfy Definition \ref{definition:Homogeneous Markov game}\ref{item:permutation invariant} and it is ease to construct local observations, $o^i=(s^i, s^1,..,s^N)$, that satisfy Definition \ref{definition:Homogeneous Markov game}\ref{item:permutation preserving}, such that the MG becomes an example of homogeneous MG.

\paragraph{Results.}
We define the unique local states by the trigonometric function, $s^i=\cos(\frac{i-1}{N-1}\pi), i=1,...,N$.
We use feature function $\phi(s,a)={\rm concat}[\{s^i, {\rm one\_hot}(a^i)\}_i]$ for the linear critic, and parameterize the actor as a linear function of $o^i$ followed by softmax over the two actions.
The consensus matrix is $1/N$ everywhere for both the critics and the actors.
For effective training, we 
1) replace the sparse reward function with a denser one, $R(s,a):={\rm mean}_{i: s^i \geq 0}\{\mathbf{1}[a^i=1]\}-{\rm mean}_{i: s^i < 0}\{\mathbf{1}[a^i=1]\}$, such that the optimal joint action is $a^i=1$ for $i\leq N/2$ and $a^i=0$ for $i > N/2$ that gets a reward of $+1$, 
and 2) instead of using the decaying stepsizes as suggested in Assumption \ref{assumption:stepsizes}, which we found is not effective for training, we use the optimizer of Adam for adaptive learning rates.
Figure \ref{fig:toy example} show the results for $N=10, 50$. While both converge, actor consensus slightly improves the learning efficiency.  

\newpage\clearpage
\section{Experiments with Various Amounts Of Communication}
\label{sec:Experiments with Various Amounts Of Communication}

\begin{figure}[ht]
\centering
\begin{subfigure}[t]{0.32\textwidth}
  \centering
  \includegraphics[width=\linewidth]{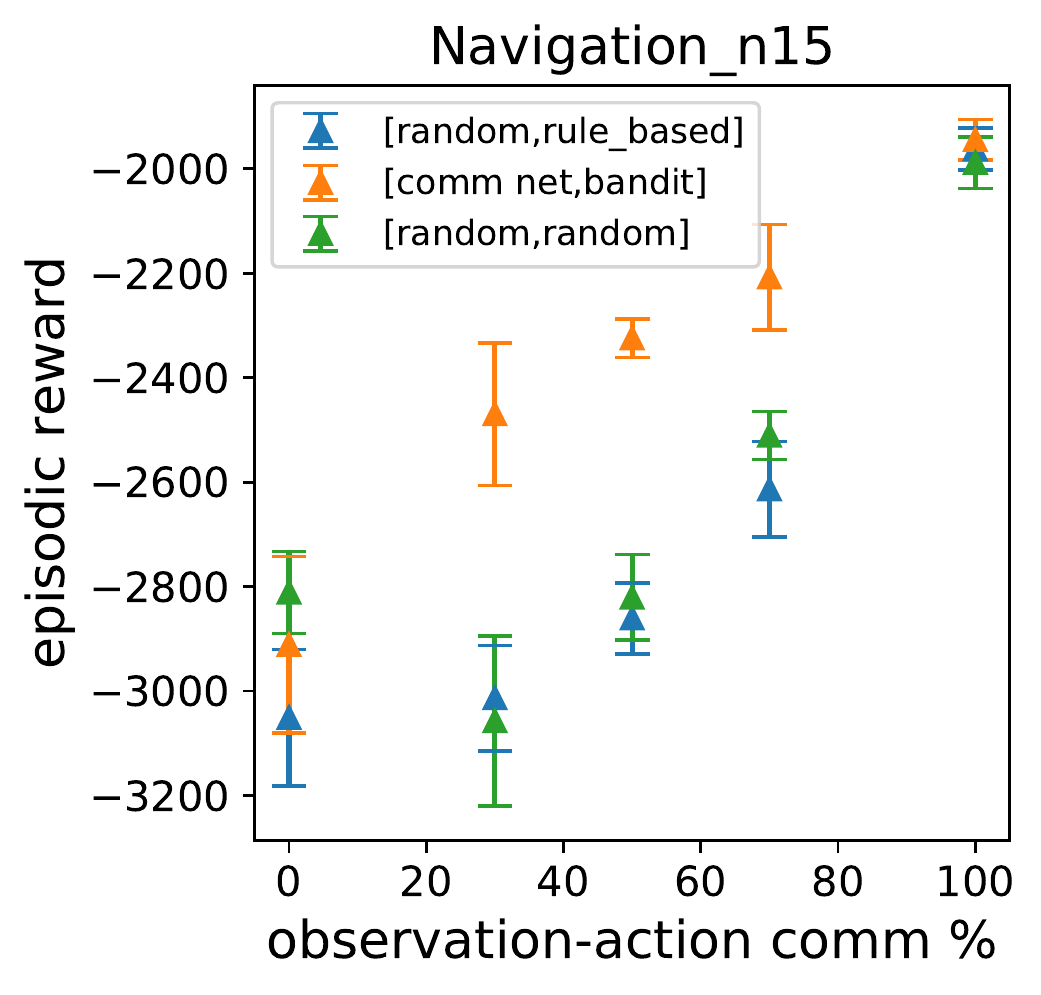}
\end{subfigure}
\hfill
\begin{subfigure}[t]{0.32\textwidth}
  \centering
  \includegraphics[width=\linewidth]{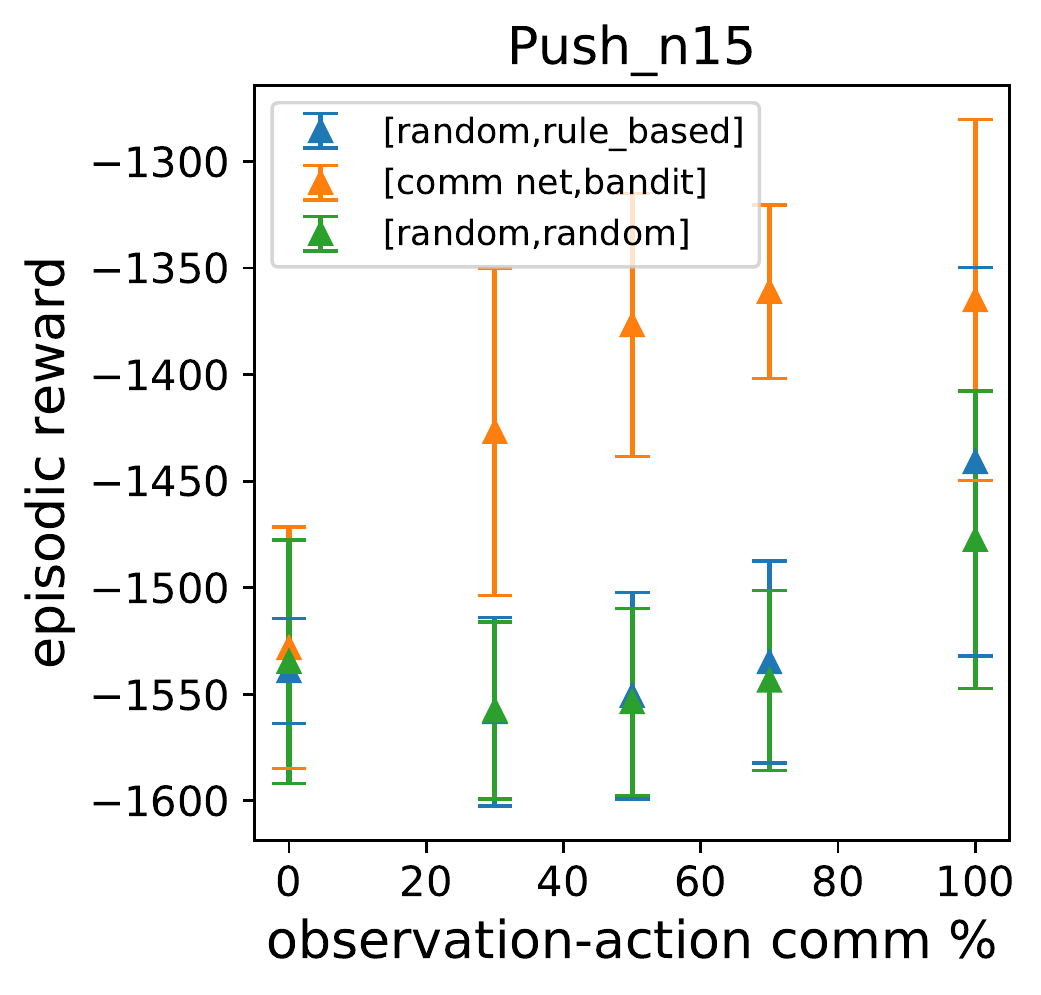}
\end{subfigure}
\hfill
\begin{subfigure}[t]{0.32\textwidth}
  \centering
  \includegraphics[width=\linewidth]{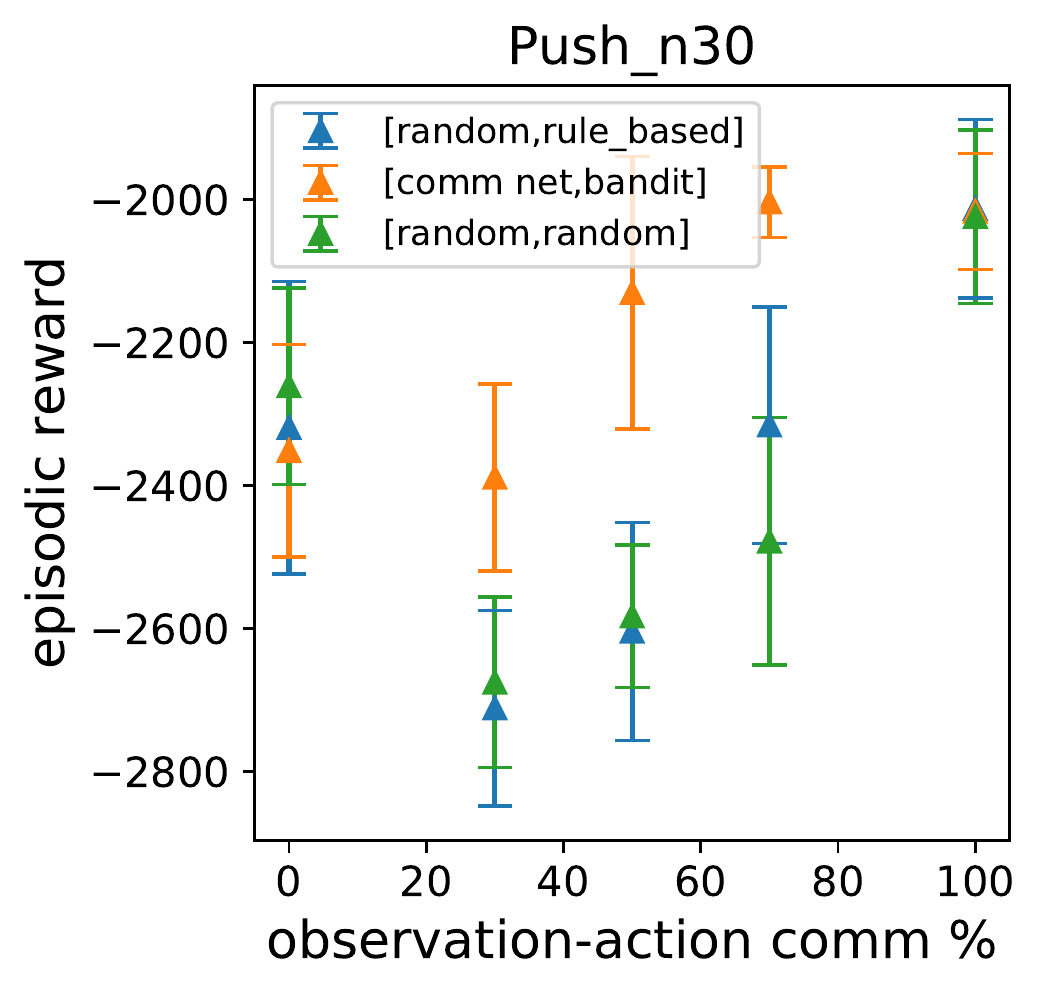}
\end{subfigure}
\caption{
Performance of our algorithm and the baselines under different observation-action communication thresholds. For fair comparison, the frequency of doing parameter consensus is the same for all the algorithms under different observation-action communication budgets. The error bar captures the standard deviation of the mean performance of the last 5 training policies (at 9e5, 9.25e5, 9.5e5, 9.75e5, 10e5 steps) across 4 seeds. 
}
\label{fig:diff thresholds}
\end{figure}

\end{document}